\newtheorem{theorem}{Theorem}
\newtheorem{definition}{Definition}
\newtheorem{example}{Example}
\newtheorem{proposition}{Proposition}
\newtheorem{lemma}{Lemma}
\newtheorem{corollary}{Corollary}
\newcommand{\T}{{\mathcal T}}
\title{Lockless Blockchain  Sharding with Multiversion Control
%%%% Cite as
%%%% Update your official citation here when published 

}
\author{
  Ramesh Adhikari \\
  School of Computer
and Cyber Sciences\\
  Augusta University \\
  Augusta, GA 30912, USA \\
  \texttt{radhikari@augusta.edu} \\
  %% examples of more authors
   \And
  Costas Busch \\
   School of Computer
and Cyber Sciences\\
  Augusta University \\
  Augusta, GA 30912, USA \\
  \texttt{kbusch@augusta.edu} \\
  %% \AND
  %% Coauthor \\
  %% Affiliation \\
  %% Address \\
  %% \texttt{email} \\
  %% \And
  %% Coauthor \\
  %% Affiliation \\
  %% Address \\
  %% \texttt{email} \\
  %% \And
  %% Coauthor \\
  %% Affiliation \\
  %% Address \\
  %% \texttt{email} \\
}
\begin{document}
\maketitle

\begin{abstract}
Sharding is used to address the performance and
scalability issues of the blockchain protocols, which divides the overall transaction processing costs among multiple clusters of nodes. Shards require less storage capacity and communication and computation cost per node than the existing whole
blockchain networks, and they operate in parallel to maximize performance.
However,
existing sharding solutions use locks for
transaction isolation which lowers the system
throughput and may introduce deadlocks. In this paper, we propose a lockless transaction method for
ensuring transaction isolation without using locks, which improves the concurrency and throughput of the transactions. In our method, transactions are split into subtransactions to enable parallel processing in multiple shards. We use versions for the transaction accounts to implement consistency among the shards. We provide formal proof for liveness and correctness. We also evaluate experimentally our proposed protocol and compare the execution time and throughput with lock-based approaches. The experiments show that the transaction execution time is considerably
shorter than the lock-based time and near to the ideal (no-lock) execution time. 
\end{abstract}

% keywords can be removed
\keywords{Blockchains  \and Blockchain Sharding \and Lockless Transactions \and Transaction Conflicts \and Parallel Commits}

\section{Introduction}
The popularity of blockchains has grown due to their numerous benefits in decentralized applications. They have several special features such as  fault tolerance, transparency, non-repudiation,  and immutability \cite{survey-of-onsensus}. To maximize bandwidth usage, every transaction is hashed with a cryptographic function and multiple transactions are divided into blocks \cite{packing-message-as-a-tool}.
After that, a ledger is created by chaining all the blocks together using a consensus mechanism to append blocks.
Assuming that nobody else can be trusted, every node is in charge of keeping its own copy of the distributed ledger.
As a result, if someone or some system attempts to alter or restore a portion of these transactions it will be detected, which provides assurances of data integrity and finality.

The distributed cryptocurrency blockchain system known as Bitcoin \cite{bitcoin} is one of the first and most well-known instances of how blockchain was originally designed for the reliable exchange of digital goods. 
A permissionless blockchain allows anyone to join or leave the network without having to reveal their true identity.
No participant can be truly trusted in such situations.
Due to the lack of identity, a computationally intensive consensus process called proof-of-work that is based on cryptography is required.
On the other hand, in permissioned systems
the environment is more controlled and 
allows for more power-efficient consensus  protocols
based on Byzantine agreement \cite{SharPer}; nevertheless, even these blockchain protocols do not scale well due to large communication overhead.

Unfortunately, conventional blockchain applications have a fully replicated architecture where each node stores a copy of the whole blockchain and processes every transaction which causes scalability issues in contemporary very big data-based applications \cite{blockchain-hype-vs-reality}. When the number of transactions and storage nodes increases, the blockchain system not only takes a longer time to achieve a consensus among nodes but also takes more time to process the transaction; therefore, it reduces the overall performance of the system. To mitigate the scalability issue of the blockchain, several blockchain protocols like  Elastico \cite{Elastico}, OmniLedger \cite{OmniLedger},  RapidChain \cite{Rapidchain}, and ByShard \cite{Byshard} has proposed to introduce sharding to provide scalability which divides the whole replicated single blockchain system to multiple shards and each shard processes its own transactions independently.

%Blockchain efficiency has recently been increased by the use of sharding, a fundamental database concept \cite{Byshard, Rapidchain}.
The blockchain nodes are divided into clusters of nodes called shards.
Subsets in each shard may contain Byzantine nodes.
We presume that each shard employs a BFT (Byzantine Fault Tolerant) consensus algorithm with authentication, such as PBFT \cite{PBFT}.
% The total number of messages exchanged while running BFT within a shard is substantially lower than when running BFT throughout the entire network because each shard has fewer nodes than the entire network.
% Therefore sharding techniques have the potential to be more scalable.
%The blockchain's blocks are built up from transactions that are created at the nodes.
% Each transaction accesses a collection of objects (accounts), each of which has a state and associated activities that could change its state (e.g. an object with read and write operations).
% The blockchain is also split up among the shards, allowing each shard to contain a portion of the whole network's blocks.
Existing sharding solutions such as, Elastico \cite{Elastico}, OmniLedger \cite{OmniLedger}, and RapidChain \cite{Rapidchain} are tailored for supporting open (or permissionless) cryptocurrency applications and are not easily generalizable to other systems. 
%These methods are not generalizable to applications outside of Bitcoin since they concentrate on a simple data model, that is  the unspent transaction output (UTXO) model. 
To address system-specific specialized approaches towards sharding, {Hellings {\textit{et al.}}} \cite{Byshard}, introduced ByShard, and combine two conventional sharded database concepts, two-phase commit and two-phase locking for  atomicity and isolation of transactions in a Byzantine environment. However, their sharding solutions are not optimal as the locks are expensive, and when a process locks a data set for reading or writing, all other processes attempting to access the same data set are blocked until the lock is released, which lowers system throughput.
%and may even introduce a deadlock in case of failures.
% In this paper, we propose a novel algorithm for ensuring atomicity and isolation of the transactions in the distributed environment. 
%We also provide the process for splitting the transaction into subtransactions to execute the transactions parallelly without using a lock.

In this paper, we propose a different method for ensuring transaction isolation without using locks, which improves the transaction processing time. We propose a novel algorithm for ensuring atomicity and isolation of the transactions in the distributed environment. 

\subsection{Contributions}
To our knowledge, we give the first lockless approach to blockchain sharding.
We provide the following contributions:
   \begin{itemize}
     \item 
     We provide a lockless protocol for sharded blockchains. Our protocol is based on multi-version concurrency control of the various shared objects (accounts) that the transactions access. 
     A transaction is first split into subtransactions that execute in parallel in multiple shards. Using object versioning, the subtransactions can detect whether there is a conflict with other subtransactions that attempt to access the same shared objects concurrently. In case of a conflict, a transaction may need to restart and attempt to commit again. 
     %Hence, with versioning, our protocol can ensure safety. To guarantee liveness, our protocol prioritizes older issued transactions,
     %which guarantees that eventually, every valid transaction will commit. 
     
     \item We provide correctness proofs for the safety and liveness of our 
     proposed protocol.
     We also evaluate our protocol experimentally through simulations and we observe that the transaction execution time is considerably faster than the lock-based approaches and also the throughput of the transactions is improved with an increasing number of shards.
   \end{itemize}

\paragraph{\bf Paper Organization:}
The rest of this paper is structured as follows: Section \ref{sec:related-work} provides previous related works. Section \ref{sec:preliminaries-and-sharding-model} describes the preliminaries for this study and the sharding model. Section \ref{sec:proposed-solution} discusses our proposed lockless sharding protocol. In \ref{section:formal} we provide the correctness analysis. Section \ref{sec:performance-evaluation} discusses the performance evaluation of our work, experimental setup, and experimental results. Finally, we give our conclusions in Section \ref{sec:conclusion}.

\section{Related Work}
\label{sec:related-work}
Several proposals have come forward to address the blockchain scalability issue in the consensus layer \cite{Jalal-Window,jalalzai2021hermes,No-Commit-Proofs,jalalzai2019proteus,SBFT}. Although these protocols have addressed the scalability issues to some extent, the system still cannot maintain good performance as the network size grows too large (thousand or more node participants). The sharding technique has been used to further improve the scalability of a blockchain network.
Sharding is a fundamental concept in databases
which has been recently used to improve the efficiency of blockchains \cite{Byshard,Rapidchain}.

 The way that conventional big database systems achieve scalability is by separating the whole database into shards (or partitions) \cite{distributed-database-sharding}, which increases the efficiency of the system by dividing the workload among the shards. To ensure ACID characteristics \cite{transaction-processing} of the database transactions, coordination is needed among the multiple shards if the transaction access multiple shards objects. In the distributed database system two-phase commit (2PC), and  two-phase locking (2PL) \cite{two-phase-locking}, are used for atomicity, concurrency control, and isolation for the transactions. And we achieved these characteristics in our model in a different way without using locks.

Similarly, the blockchain network can be split up into smaller committees using the well-researched and tested technique of sharding, which also serves to scale up databases and lower the overhead of consensus algorithms. Elastico \cite{Elastico}, OmniLedger \cite{OmniLedger}, and RapidChain \cite{Rapidchain} are a few examples of sharded blockchains. These methods are not generalizable to other applications since they concentrate on a simple data model, that is  the unspent transaction output (UTXO) model \cite{Cerberus}. In addition, these methods use locks for the isolation of transactions.
As in databases, a blockchain transaction must be isolated since it interacts with the global state. In reality, it is necessary to avoid dirty, phantom, or unrepeatable reads \cite{sql-isolation-level}. Additionally, transactions must comply with all of the ACID properties \cite{transaction-processing}.
 Typically, two-phase locking \cite{two-phase-locking} is used to accomplish optimistic concurrency control \cite{optimistic-menhods-for-concurrency-control}, serializable snapshot isolation \cite{serializable-isolation,serializable-snapshot-isolation}. 

To mitigate the system-specific specialized approaches towards sharding, {Hellings {\textit{et al.}}} \cite{Byshard}, propose ByShard. It uses a two-phase commit to ensure the atomicity of the transaction and two-phase locking for isolation of the transaction in a Byzantine environment of the blockchain system. However, locking is expensive because when a process locks a data set for reading or writing, all other processes attempting to access the same data-set are blocked until the lock is released, which lowers system throughput. An innovative lock-free method for ensuring transaction isolation is presented by {Hagar Meir {\textit{et al.}}} \cite{lockless-transaction-hyperledger-fabric}, In order to construct version-based snapshot isolation, it takes advantage of the key-value pair versioning that already exists in the database and is mostly utilized at the validation phase of the transaction to detect the read-write \cite{lockless-transaction-hyperledger-fabric}. However, they are not addressing their solution in a sharding-based blockchain model.

In our solution, we are using a lockless approach to achieve transaction isolation with sharding. We use multiversion concurrency control~\cite{multiversion-concurrency-control},  as we describe in our proposed model in Section \ref{sec:proposed-solution}.

\section{Preliminaries and Sharding Model}
\label{sec:preliminaries-and-sharding-model}
\paragraph{\bf Blocks and Blockchain:}
Similar to a page in a record-keeping book (ledger), a block in a blockchain is a data structure that contains a sequence of transactions. A block has a header with additional metadata, including the block hash and a Merkle root hash for all the transactions in the block.
Blockchain is a chain (linked list) of blocks, which is an immutable, distributed, decentralized, peer-to-peer ledger replicated over several nodes connected in a network that allows the recording of transactions.

\paragraph{\bf Shards:}
We assume that the system consists of a set of $N$ (replica) nodes, where $n = |N|$. We design a sharded system as a partitioning of the $N$ nodes into $w$ shards $S_1, S_2,\dots, S_w$, where $N = \cup_i S_i$ and each $S_i \subseteq N$ is a subset of the nodes such that $S_i \cap S_j = \emptyset$, for $i \neq j$. Let $n_i = |S_i|$ represent the number of replica nodes in shard $S_i$, and $f_i$ represent the number of Byzantine nodes in shard $S_i$. Similar to related work \cite{Rapidchain}, to achieve Byzantine fault tolerance in each shard we assume that $n_i > 3 f_i$ within each shard. Hence, we focus on the consistency aspects, assuming there is an underlying consensus protocol in each shard. 

Let $\mathcal{O}$ be a set of shared objects that are accessed by the transactions.
Similar to related work \cite{Byshard}
we assume that every shard is responsible for a subset of the shared objects (accounts) that are accessed by the transactions.
Namely, $\mathcal{O}$ is partitioned into 
subsets $\mathcal{O}_1, \ldots, \mathcal{O}_w$,
where $\mathcal{O}_i$ is the set of objects handled by shard $S_z$.
Every shard $S_i$ maintains its own local ledger (local chain) $L_{i}$ and runs a local consensus algorithm to achieve this (e.g. PBFT \cite{PBFT}). 
The shard $S_i$ processes subtransactions  related to the object set $\mathcal{O}_{i}$ (see below for the description of subtransactions).
The local chains define implicitly the global blockchain, that is,
the global order of all transactions is implied by the order of their respective subtransactions in the local chains. 

\paragraph{\bf Timing Assumptions:}
We consider a semi-synchronous setting where communication delay is upper bounded by some time $\Delta_1$,
which means that every message is guaranteed to be delivered within $\Delta_1$ time.
Our sharding protocol does not require knowledge of $\Delta_1$.
We assume that every transaction has a unique ID based on its generation timestamp,
hence IDs grow over time.
Due to the semi-synchronous model,
since local clocks are not perfectly synchronized,
we assume a new timestamp (generated at any node in $N$) 
will be strictly larger than any timestamp generated $c \cdot \Delta_1$ 
time earlier (where the constant $c$ depends on the system).
Hence, we assume that for a transaction $T_i$ that arrives
at time $t$, any other transaction $T_j$ that arrives after $t + c \cdot \Delta_1$ 
will have always a higher ID than $T_i$.

For guaranteeing liveness in our protocol,
we assume that each $\Delta_2$ time each shard sends the lowest transaction ID 
from its transaction pool to other shards.
Here $\Delta_2$ is known to each shard (in order to periodically perform the lowest ID transmission) but is not related to $\Delta_1$.
In this way, each shard maintains the set of lowest transaction IDs which are periodically updated
with new lowest ID information from each shard. 
The transaction which has the global lowest ID
gets within a bounded time high priority and is eventually added to the blockchain.
The process of propagating the lowest IDs is running in the background 
while the normal execution phases take place.

%We reconfigured the set of lowest transaction IDs ($A$) periodically in every $\Delta_2$ time so that already executed transactions (i.e. committed and aborted transactions) are removed from that set.

Similar to previous works \cite{Byshard}, 
%we assume that each shard runs the consensus algorithm and takes 30ms (say $\Delta_3$) for decisions.
we assume that each shard runs locally a PBFT-style \cite{PBFT} consensus algorithm 
in every phase of our algorithm
which takes bounded time $\Delta_3$ for decisions (e.g. in \cite{Byshard} it is assumed that $\Delta_3 = 30ms$).
Our protocol does not need to know $\Delta_3$.

\paragraph{\bf Subtransactions:}

We model each transaction $T_i$ which consist of subtransactions $T_{i,k_1},\ldots,T_{i,k_j}$, such that:
   \begin{itemize}
     \item Subtransaction $T_{i,k_l}$ uses only objects in $\mathcal{O}_{k_l}$ in shard $S_{k_l}$. We also say that the subtransaction $T_{i, k_l}$ belongs to shard $S_{k_l}$.
     \item The subtransactions of a transaction $T_i$ do not depend on each other and can be executed in parallel in any relative order.
     \item A subtransaction consists of two parts: (i) condition checking, where various explicit conditions on the objects are checked, and (ii) updates on the objects. 
     %(Validity of the transaction is performed before the updates, and it may check additional necessary conditions not explicitly given in part i.)
   \end{itemize}

\begin{example}
Consider a transaction ($T_1$) consisting of read-write operations on the accounts with several conditions.

$T_1$ = ``Transfer 2000 from Rock account to Asma account, if Rock has 3000 and Asma has 500 and Mark has 200''. We split this transaction into three subtransactions, where shards $r$, $a$, and $m$ are responsible for the respective accounts of Rock, Asma, and Mark:
\end{example}
% \begin{itemize}
% \item[$T_{1,r}$]:
% ''Check Rock has 3000''\\
% : ''Remove 2000 from Rock account''\\
% \item[$T_{1,a}$]:
% ''Check Asma has 500''\\
% ~~~: ''Add 2000 to Asma account''\\
% \item[$T_{1,m}$]: ''Check Mark has 200\\
% \end{itemize}

\begin{minipage}[t]{0.415\columnwidth}
\begin{itemize}
\item[$T_{1,r}$]:
``Check Rock has 3000''\\
: ``Remove 2000 from Rock account''\\
\item[$T_{1,m}$]: ``Check Mark has 200''\\
\end{itemize}
\end{minipage}\hfill % maximize horizontal separation
\begin{minipage}[t]{0.415\columnwidth}
\begin{itemize}
\item[$T_{1,a}$]:
``Check Asma has 500''\\
~~~: ``Add 2000 to Asma account''\\
\end{itemize}
\end{minipage}

After splitting the transaction $T_1$ into its subtransactions we send each subtransaction to its respective shard associated with that account. If the conditions are satisfied (for example in $T_{1,r}$ check if Rock has 3000) and the transaction is valid (for example in $T_{1,r}$ Rock has indeed 2000 in the account to be removed) then the destination shards are ready to commit the subtransactions which imply that $T_1$ will commit as well. Otherwise, if any of the conditions in the subtransactions is not satisfied or the subtransactions are invalid, then 
the corresponding subtransactions abort, which results in $T_1$ aborting as well. In this case, all subtransactions of $T_1$ will also be forced to abort. 

\section{Sharding Algorithm}
\label{sec:proposed-solution}

%\subsection{Overview of the Algorithm}
Our sharding protocol consists of two parts, the leader shard algorithm (Algorithm \ref{alg:one}), and the destination shard algorithm (Algorithm \ref{alg:two}).
%Furthermore, we also have one additional algorithm, which we call Algorithm \ref{alg:three} that is executed recursively in the special case only.
 
%We also consider multiple lenders' shards and destination shards which can proceed transactions parallelly. 
Every transaction has a designated {\em leader shard}, which will handle its processing.
Each leader shard has a transaction pool for all the transactions that have it as their leader.
The job of the leader shard is to pick a transaction from the transaction pool and split it into subtransactions and send them to {\em destination shards}. The leader shard interacts with the destination shards through a protocol with seven phases which decide whether the subtransactions they receive are able to commit locally or not. 
%Each leader shard has its own transaction pool which consists of the transactions generated from clients and every transaction pool has a priority queue that decides which transaction needs to proceed next. 
The leader shard picks the transaction from its transaction pool on the basis of the priority of the transactions so that the earliest transaction (i.e. with lower ID) proceeds first.
%(i.e. if there are transactions $T_1$, $T_2$ in the pool and then leader picks $T_1$ first and $T_2$ second). 
Whereas the destination shard checks each received subtransaction and if it is valid then it commits it and appends it to its local ledger, otherwise, it aborts the subtransaction and sends the corresponding message to its leader shard.

To achieve transaction isolation, we use multi-version concurrency control~\cite{multiversion-concurrency-control} in each destination shard, which saves multiple versions of each object (account) so that data can be safely read and modified simultaneously. 
%We illustrate Multiversion concurrency control in Figure \ref{example-of-multiversion-concurrency-system}. 
When a destination shard processes a subtransaction, it takes a snapshot of the current version of each object that the subtransaction will access. 
When the subtransaction is about to commit, it compares the latest version with the recorded snapshot version. If these are the same then the subtransaction is eligible to commit; otherwise,
the subtransaction cannot commit. 
The leader shard is informed accordingly from the destination shards. If all subtransactions are eligible to commit then the whole transaction will commit and is removed from the leader shard pool. If however, a subtransaction is not eligible to commit, the whole transaction will restart and is reinserted back into its pool.

In our algorithm, each transaction whose conditional statements are satisfied will eventually commit (as we show in the correctness proofs).
Our algorithm may attempt to commit the transaction multiple times by restarting it in case of conflicts with other transactions. However, if the condition of a transaction is not satisfied then the transaction is aborted and will not restart (is removed completely from the pool). Using the object versions the algorithm guarantees safety, as it does not allow conflicting transactions to commit concurrently. To ensure liveness, the algorithm prioritizes earlier generated transactions.
\subsection{Detailed Algorithm}
We now describe the details of our protocol in Algorithms \ref{alg:one} and \ref{alg:two}.
Our combined protocol consists of seven phases. 
As mentioned in Section \ref{sec:preliminaries-and-sharding-model},
to ensure liveness, periodically every $\Delta_2$ time, each leader shard sends the lowest ID from its transaction pool to every other shard. So that in case of conflict, 
priority is given to the transaction with the smallest known ID. 
In this way, each destination shard maintains in $T''_l$ (Algorithm \ref{alg:two})
the lowest known ID that it received from all leader shards.
If a subtransaction realizes that it belongs to a transaction with the smallest ID in the system then 
it gets the highest priority and enforces itself to commit.
This is further achieved with the help of a rollback mechanism that we discuss below.

Now we describe each phase of our algorithm.
For simplicity of presentation, 
we assume that each subtransaction accesses a single object in each destination shard.
However, our algorithms can be generalized for the case where each subtransaction accesses 
multiple objects.
% \begin{figure}[!ht]
  
%   \centering
%   \includegraphics[width=1\textwidth]{images/alg.png}
%   % \caption{Proposed Algorith}
%   \label{example-of-multiversion-concurrency-system}
% \end{figure}

\begin{algorithm}
\caption{Leader Shard $S_k$}
\label{alg:one}
% \small
% \tcc{The leader shard processes one transaction $T_i$ at a time}
%$T_i\leftarrow$ next transaction to be processed\\
Let $P_k$ be the pool of pending transactions in shard $S_k$\;
\tcp{Periodically, every $\Delta_2$ time the transaction with lowest ID in $P_k$ is sent to every other shard}
Let $C_k$ be the committed transaction pool;
% $L(T_i)\leftarrow$ current shard i.e., leader shard of $T_i$\\
%$S(T_i) \gets$ set of destination shards for $T_i$\;
%$T_{i,j}\leftarrow$ set of subtransactions of $T_i$ and it sets each state with respect to the phase of the algorithm while processing \\
\BlankLine
\tcp{Phase 1}
Pick transaction $T_i$ with lowest ID from $P_k$ and remove it from $P_k$\;

Split $T_i$ into subtransactions\;

Let $S(T_i)$ be the set of destination shards for the subtransactions of $T_i$\; 

Send each subtransaction $T_{i,j}$ to the corresponding destination shard $S_j$ (in parallel for all subtransactions of $T_i$)\;
%In each $\Delta_2$ time, also send the lowest ID transaction $T_l$ from its transaction pool to every destination shard in the system\;
\BlankLine

\tcp{Phase 3}
\uIf{{\em ``commit vote''} message is received from all shards in $S(T_i)$}
    {
    Send ``commit'' message to all shards in $S(T_i)$\;
    }\
\ElseIf{{\em ``abort vote''} message is received from any shard in $S(T_i)$}{
Send ``abort'' message to all shards in $S(T_i)$\;
}
% \ElseIf{Leader receives {\em ``restart vote''} from any shards}{
% Send {\em ``$T_{i,j}$ restart''} message  to all respective shards\
% }
\BlankLine
\tcp{Phase 5}
\uIf{{\em ``committed''} message is received from all shards in $S(T_i)$}{%
    Send ``release'' to all shards in $S(T_i)$\;
}\uElseIf{{\em ``restart vote''} message is received from any shard in $S(T_i)$}{
Send ``restart'' message  to all shards in $S(T_i)$\;
}\ElseIf{{\em ``aborted''} message is received from all shards in $S(T_i)$}{
  Transaction $T_i$ is discarded\;
}
\BlankLine
\tcp{Phase 7}
\uIf{{\em ``released''} message is received from all shards in $S(T_i)$}{
    Transaction $T_i$ has completed\;
    Add $T_i$ to $C_k$;
}\ElseIf{{\em ``restarted''} message is received from all shards in $S(T_i)$}{
Transaction $T_i$ is reinserted into the pool $P_k$ to be processed again\;
}

\BlankLine
\BlankLine
\tcp{Handling Force Rollback Messages}
\uIf{{\em ``force rollback $T'_{x,j}$''} message is received and $S_k$ 
is the leader shard of the respective transaction $T'$}{
  \uIf{$T' \in C_k$} {Get subtransaction information from $C_k$\;}
  Send respective ``force rollback $T'_{x,z}$'' to all destination shards of $T'$\; 
  %Algorithm \ref{alg:three}
  }
  \uIf{{\em ``rollbacked $T'_{x,z}$''} message is received from all destination shards of $T'$ and $S_k$ is the leader shard of the transaction $T'$}{
    Insert $T'$ back in the pool $P_k$\;
    \If{$T' \in C_k$}{
        Remove $T'$ from $C_k$;
    }
  }
\end{algorithm}

\begin{algorithm}
\caption{Destination Shard $S_j$}
\label{alg:two}
% \small
% $\top$ $\leftarrow$ snapshot of objects at $S_j$ //initially it is NULL\\ 
%\tcp {Let $A$ be the set of lowest transaction IDs received from all leader shards (from messages sent periodically by all leader shards every $\Delta_2$ time)}
$T''_l\leftarrow$ the lowest transaction ID from the IDs propagation process\;

\BlankLine
\tcp{Phase 2}
Subtransaction $T_{i,j}$ from leader shard $S_k$ is received\;
Suppose $T_{i,j}$ accesses object $O_d$\;
Let $R(O_d)$ and $W(O_d)$ be a set of transactions that
will respectively read or write $O_d$\;
%\If{{\em subtransaction} $T_{i,j}$ {\em accesses object} $O_d$}{
% savepoint $\leftarrow$ DB.GetSavePoint($O_j$)\\
Let $V_d$ be the latest version of object $O_d$\;
% Add $T_{i,j}$ to access set $B(O_d)$\;
  \eIf{constraint match}{
  % snapshot $\leftarrow GetSnapshot(O_j)$\\
  % \eIf{(snapshot $\neq \emptyset$ )} {
  %           \tcp{there exist a subtransaction say $T_{x,j}$ accessing the object $O_j$}\
  %           \eIf{$T_{i,j} < T_{x,j}$}{
  %           Add $T_{i,j}$ to snapshot and  send  {\em ``commit vote''} to its leader; 
  %           release $T_{x,j}$ from snapshot and send {\em ``restart vote''} $T_{x,j}$ to its leader\\
  %         }{
  %           Send  {\em ``restart vote''} $T_{i,j}$ to its leader
  %         }
  %       }{
  %          add $T_{i,j}$ to snapshot ($O_j$) and send {\em ``commit vote''} to its leader
  %       }
Add $T_{i,j}$ to $R(O_d)$\;
\uIf{$T_{i,j}$ will write to $O_d$}{Add $T_{i,j}$ to $W(O_d)$\;}
 Send ``commit vote'' message to $S_k$\;
  }{
    Send ``abort vote'' message to $S_k$\;
  }
%}
%\If{{\em receives} $T_l$ from any leader shard}{
%reconfigure $A$ and add $T_l$ to $A$
%}
\BlankLine
\tcp{Phase 4}
    \uIf{{\em ``commit''} message is received from $S_k$}{%
    % $X \leftarrow  GetSnapshot(O_d)$\;
    % \eIf{constraint match}{
        \uIf{ $(((W(O_d) \setminus \{T_{i,j}\} = \emptyset)$ or $(T_{i,j} \in W(O_d)$ and $R(O_d) \setminus \{T_{i,j}\} = \emptyset))$ and $($the latest version of object $O_d$ is still $V_d))$ or $(T_{i,j} = T''_l)$}
            {Append transaction $T_{i,j}$ to local chain $L_{j}$\;
             Send ``committed'' message to $S_k$\;
             %\If{$T_{i,j}$ will update $O_d$} {add $T_{i,j}$ to $W(O_d)$\;}
        
        % \uElseIf{$T_{i,j} = T''_l$ ($T_{i,j}$ is the transaction with the smallest ID)}{
        % Append transaction $T_{i,j}$ to local chain $L_{S_k}$\; 
        % Send {\em ``committed''} message to $S_k$\;
        % If $T_{i,j}$ will update $O_d$ then add $T_{i,j}$ to $W(O_d)$\;
        \If{$T_{i,j} = T''_l$ \tcp{$T_{i,j}$ has the lowest ID in the system}}
        {For each $T'_{x,j} \in W(O_d)$ send ``force rollback'' message to its respective leader shard\;}
        %Also if the version of $O_d$ is different than $V_d$, then execute lines 35-37 below for $T'_{x,j}$ being the first transaction that modified $O_d$\;}
        % }
        % \ElseIf{$X$ contains $T'$ such that $T_{i,j} > T'$}{
        %  send {\em ``restart vote''} to $T_{i,j}$ leader\
        % }
        }
        \Else{
         send ``restart vote'' message to $S_k$\;
        }
     % }{
     %    send {\em ``abort vote''}  to leader
     % }
}\ElseIf{{\em ``abort''} message is received from $S_k$}{
  Send ``aborted'' message to $S_k$\;
}
\BlankLine
\tcp{Phase 6}

\uIf{ {\em ``restart''} message is received from $S_k$}{
Remove $T_{i,j}$ from $R(O_d)$ and $W(O_d)$ and from local chain $L_j$\;
Send ``restarted'' message to the leader $S_k$\;
}
\ElseIf{{\em ``release''} message is received from $S_k$}{
    \If{$T_{i,j}$ in $W(O_d)$} {
       Create new version $V_d+1$ for the object $O_d$\;
    }
Remove $T_{i,j}$ from $R(O_d)$ and $W(O_d)$\;
Send {\em ``released''} message to $S_k$\;
}
\BlankLine
\BlankLine
\tcp{Handling Force Rollback Messages}
\If{{\em ``force rollback $T'_{x,j}$''} message is received}{
    %\If{$T'_{x,j}$ already in $W(O_d)$}{
        Remove $T'_{x,j}$ from $R(O_d)$ and $W(O_d)$\;
    %}
  Let $Z$ be the suffix in local chain $L_j$ starting from $T'_{x,j}$\; 
  Remove from $L_j$ the suffix $Z$ and send ``rollbacked $T'_{x,j}$'' message to its leader shard\;
  For each subtransaction $T'_{x,j}$ in $Z$, send ``force rollback $T'_{x,j}$'' message to the leader shard of $T'_{x,j}$\; 
  }
\end{algorithm}

% % Force rollback algorithm
% \begin{algorithm}
% \caption{Force Rollback}
% \label{alg:three}
% % \small
% \If{leader shard $S_j$ receives ``force rollback $T'_{x,j}$'' message from leader shard} {
%     rollback $T'_{x,j}$ from its shards\;
%     release snapshot of $T'_{x,j}$\;
%     send rollbacked $T'_{x,j}$ to its leader shard
    
%     \If{there exists a dependent subtransaction $T'$ on $T'_{x,j}$ accessed the version of the object added by $T'_{x,j}$} {
%         send {\em force rollback $T'$} to its leader shard
%         }
% }    

% \BlankLine
% \BlankLine
% \end{algorithm}

{\bf Phase 1:} (Algorithm \ref{alg:one}) the leader shard ($S_k$) picks a transaction with the lowest transaction ID from its transaction pool ($P_k$) and  splits that transaction $T_i$ into its subtransaction $T_{i,j}$ and sends each $T_{i,j}$ to corresponding destination shards ($S_j$) in parallel.

{\bf Phase 2:} (Algorithm \ref{alg:two}) after receiving the subtransaction $T_{i,j}$ in destination shard ($S_j$) accessing an object, say $O_d$, it takes the latest version (say $V_d$) of the object $O_d$. After that, it checks the conditions (constraints) of the subtransaction $T_{i,j}$. If the constraints match (means subtransaction is eligible to commit) then, it adds the $T_{i,j}$ to the read set $R(O_d)$ and if $T_{i,j}$ will also write to $O_d$ then it also adds $T_{i,j}$ to write set $W(O_d)$ and sends a ``commit vote'' to the leader shard $S_k$. Otherwise, it sends a ``abort vote'' to the leader shard.

{\bf Phase 3:} (Algorithm \ref{alg:one}) the leader shard $S_k$ collects the votes from all the destination shards, and if it gets all ``commit vote'', (that means constraints are matched in all respective destination shards) then it sends the ``commit'' message to the corresponding destination shards. Similarly, if it gets any ``abort vote'' then it sends an ``abort'' message to all respective destination shards. 

{\bf Phase 4:} (Algorithm \ref{alg:two}) if the destination shard receives a ``commit'' message from a leader shard then, it checks the read set ($R(O_d)$) and write set ($W(O_d)$) of the accessing object and also checks the version of the object. If the subtransaction $T_{i,j}$ is only reading the object $O_d$ and the latest version of the object $O_d$ is still the same (i.e. $V_d$) then the shard appends this subtransaction to its local ledger $L_j$ and sends ``committed'' message to the leader. Similarly, if $T_{i,j}$ is trying to update object $O_d$ and the read set only contains $T_{i,j}$ (i.e. $(T_{i,j} \in W(O_d)$ and $R(O_d) \setminus \{T_{i,j}\} = \emptyset)$) and the latest version of the object $O_d$ is still same as the previously taken version (i.e. $V_d$) (that means the object is not modified by other transactions) then it does the necessary update operation and adds the subtransaction $T_{i,j}$ to its local chain and sends the ``committed'' message to the leader shard. Moreover, if the transaction ID of subtransaction $T_{i,j}$ is equal to the lowest known transaction ID ($T''_l$), that means the current subtransaction $T_{i,j}$ is the earliest subtransaction among all and it has a higher priority to execute. So it appends the subtransaction $T_{i,j}$ to its local chain and sends a ``committed'' message to its leader shard, and also sends ``force rollback'' to the leader of the subtransaction which is in the write set ($W(O_d)$). Otherwise, it sends a ``restart vote'', which means there is another higher-priority transaction accessing the object $O_d$ and not released yet. Similarly, if it receives the ``abort'' message then it sends an ``aborted'' message to its leader shard.

{\bf Phase 5:} (Algorithm \ref{alg:one}) if it receives a ``committed'' message from all destination shards (means that eligible to commit subtransactions are added to their local chain) then it sends a ``release'' message to the respective destination shards to release the subtransactions from their read set, write set and also to update the version of the object if required. Similarly, if the leader receives any ``restart vote'' message from any shards then it sends the ``restart'' message to the respective destination shards because some of the shards may have appended the subtransaction to their local chain so that should be removed, and restart should be consistent in all shards. Moreover, if it receives an ``aborted'' message from all destination shards, then transaction $T_i$ is discarded. 
%Algorithm \ref{alg:three} and waits for the response.

{\bf Phase 6:} (Algorithm \ref{alg:two})  if the destination shard receives a ``restart'' message from the leader shard, then it removes the transaction $T_{i,j}$ from 
$R(O_d)$ and $W(O_d)$ and also removes $T_{i,j}$ from its local chain $L_k$ if it already added and sends ``restarted'' message to the leader. Similarly, if it receives a ``release'' message and $T_{i,j}$ is already in $W(O_d)$ that means it updated the object $O_d$ so it creates the new version of the object as $V_d +1$. After that, it removes $T_{i,j}$ from $R(O_d)$ and $W(O_d)$ and sends a ``released'' message to its leader shard.

 {\bf Phase 7:} (Algorithm \ref{alg:one}) if the leader shard receives a ``released'' message from all destination shards that means the transaction $T_i$ is completed, so it adds $T_i$ to the pool of committed transactions ($C_k$) so that it can get all the subtransaction information of $T_i$ in case of rollback. Similarly, if it receives a ``restarted''  message from all destination shards, then this transaction needs to be processed again, and is reinserted into the transaction pool $P_k$.

{\bf Handling Rollbacks:} This part of our protocol executes only in the special case (i.e. when the current transaction has the highest priority to execute than the already running transaction accessing the same object). After receiving the ``force rollback'' message from destination shards, the leader shard checks whether that subtransaction belongs to the committed transaction pool ($C_k$) or not. If the transaction of that subtransaction is in $C_k$  then it gets the other subtransaction information from $C_k$ otherwise it has the information about the currently running transaction, then it sends a ``force rollback'' message to all respective destination shards because it should be rollbacked in all the shards to be consistent. So if the destination shard receives the ``force rollback $T'_{x,j}$'' message from the leader then it rollbacks $T'_{x,j}$ from its shards and sends ``rollbacked $T'_{x,j}$'' message to its leader. Furthermore, if there exists some depending subtransaction $T'$ on $T'_{x,j}$ accessed the version of the object added by $T'_{x,j}$ then all depending transactions should be rollbacked and it sends ``rollback $T'$'' to its leader shard so this function executes recursively to rollback all the transactions which read the version of object added by $T'$. The leader shard collects the ``rollbacked'' messages from all the destination shards, and after receiving all the ``rollbacked'' messages from all the respective shards for the transaction $T'$, it adds $T'$ to its transaction pool to be processed again and removes $T'$ from the committed pool ($C_k$) if $T'$ is already in $C_k$.

\begin{example}
Consider two conflicting transactions $T_1$ and $T_2$ consisting of read-write operations on the accounts. We explain how our protocol handles these transactions.
\begin{itemize}
    \item[]$T_1$ = ``Transfer 2000 from Rock account to Asma account, if Rock has 3000 and Asma has 500 and Mark has 200''.\\
    \item[] $T_2$ = ``Transfer 500 from Asma account to Bob account, if Asma has 5000''.
\end{itemize}
\end{example}
Suppose leader shard ($S_{k_1}$) handles transaction $T_1$ and splits it into three subtransactions, where shards $S_r$, $S_a$, and $S_m$ are responsible for the respective accounts of Rock, Asma, and Mark. Similarly, the leader shard ($S_{k_2}$) handles transaction $T_2$ and splits it into two subtransactions where shard $S_a$ is responsible for the Asma account and shard $S_b$ is responsible for the Bob account.\\
% \begin{itemize}
% \item[$T_{1,r}$]:
% ''Check Rock has 3000''\\
% : ''Remove 2000 from Rock account''\\
% \item[$T_{1,a}$]:
% ''Check Asma has 500''\\
% : ''Add 2000 to Asma account''\\
% \item[$T_{1,m}$]: ''Check Mark has 200''\\
% \item[$T_{2,a}$]:''Check Asma has 5000''\\
% : ''Remove 500 from Asma account''\\
% \item[$T_{2,b}$]:
% ''Add 500 to Bob account''\\
% \end{itemize}

\begin{minipage}[t]{0.415\columnwidth}
  \begin{itemize}
\item[$T_{1,r}$]: ``Check Rock has 3000''
\\: ``Remove 2000 from Rock account''\\
\item[$T_{1,a}$]:
``Check Asma has 500''\\
: ``Add 2000 to Asma account''\\
\item[$T_{1,m}$]: ``Check Mark has 200''\\

\end{itemize}
\end{minipage}\hfill % maximize horizontal separation
\begin{minipage}[t]{0.415\columnwidth}
 \begin{itemize}

\item[$T_{2,a}$]: ``Check Asma has 5000''\\
: ``Remove 500 from Asma account''\\
\item[$T_{2,b}$]:
``Add 500 to Bob account''\\
\end{itemize}
\end{minipage}

Let us consider both leader shards ($S_{k_1}$ and $S_{k_2}$) are trying to execute their transaction in parallel. At this condition, there are two subtransactions accessing the same account of Asma (i.e. $T_{1, a}$, $T_{2, a}$) so there will be a conflict on the respective subtransactions. So, in our algorithm, each destination shard takes the version of every account and checks that version at the time of commit. 
In case of conflicts, the transaction that updated earlier the read and write sets ($R$ and $W$) at the destination shard will have a chance to commit.
% Moreover, the earliest transaction is given a priority to execute first, and only one transaction will be completed and another transaction is restarted. In this example, if the transaction generation time of $T_1$ is earliest than $T_2$ (i.e $ID(T_1)<ID(T_2)$) then $T_1$ will execute first, and then $T_2$ will restart and try to execute after $T_1$.

\section{Correctness Analysis}
\label{section:formal}
%\subsection{Distributed Ledger}
Consider a set of transactions $\T = \{T_1, T_2, \ldots, T_\zeta\}$.
The objective is to arrange all transactions in $\T$
in a sequence $B = T_{i_1}, T_{i_2}, \ldots, T_{i_\zeta}$,
which is agreed upon by all non-faulty nodes in $N$.
We also write $T_{i_l} \prec_{B} T_{i_{l'}}$ for $l < l'$ to denote the relative order 
between two transactions in the sequence $B$.
The sharding system does not maintain the actual $B$ as a single blockchain (or ledger) explicitly,
but rather, the blockchain consists of a collection of local chains
which if combined they jointly give the whole blockchain $B$. 
%$L_1. \ldots, L_w$, where $L_i$ is the chain of shard $S_i$.
%Each local chain $L_i$ is agreed by a shard. 
% We show how to integrate the local shards into a single blockchain.
%The leader shard of transaction $T_i$ sends its subtransactions to the respective shards.

Each shard $S_{\alpha}$ maintains a local chain
$L_{\alpha}$ of the sub-transactions $T_{i,\alpha}$
that it receives.
The subtransactions are appended in $L_{\alpha}$
according to the order that they commit in $S_{\alpha}$.
If $T_{i,\alpha} \prec_{L_{\alpha}} T_{j,\alpha}$,
and $T_{i,\alpha}$ conflicts with $T_{j,\alpha}$ (the two subtransactions conflict if they access the same object in $S_\alpha$ and one of the two is updating the object),
then we say that $T_{i,\alpha}$ {\em causes} $T_{j,\alpha}$
and we write $T_{i,\alpha} \to_{L_{\alpha}} T_{j,\alpha}$.

We define the {\em local chain system} as the tuple $L = (L_1, \ldots, L_{w})$ 
consisting of local chains in shards.
If $T_{i,\alpha} \to_{L_{\alpha}} T_{j,\alpha}$,
we can also simply write $T_{i,\alpha} \to_{L} T_{j,\alpha}$.
The casual relation $\to$ can be extended across two local chains $L_\alpha$ and $L_\beta$, $\alpha \neq \beta$, in the following way.
\begin{itemize}
\item
If $T_{i,\alpha} \to_{L_\alpha} T_{j,\alpha}$
and $T_{j}$ has a subtransaction $T_{j,\beta}$. 
\item
If $T_{i}$ has subtransactions $T_{i,\alpha}$ and $T_{i,\beta}$
such that $T_{i,\beta} \to_{L_\beta} T_{j,\beta}$.
\end{itemize}
In both cases, we say that
$T_{i,\alpha}$ causes $T_{j,\beta}$,
and we write $T_{i,\alpha} \to_L T_{j,\beta}$.
Consider from now on the transitive closure of the causal relation $\to_L$.

We say that the local chain system $L$ is {\em valid} 
if there is no subtransaction $T_{i,\alpha}$ 
such that $T_{i,\alpha} \to_L T_{i,\alpha}$.
That is, $L$ is a valid local chain system if there 
is no cyclic (transitive) causal relationship of a subtransaction to itself.

We say that a sequence $B$ is a {\em valid serialization} of the 
local chain system $L$ if $B$ is a sequence of all the subtransactions
which preserves the causal relationship of $L$.
Namely, if $T_{i,\alpha} \to_{L} T_{j,\beta}$ then $T_{i,\alpha} \prec_B  T_{j,\beta}$.
We say that a sequence $B$ is a {\em blockchain serialization} of $L$
if $B$ is a valid serialization of $L$, and for each transaction $T_i$
its subtransactions $T_{i,j_1}, \ldots, T_{i,j_{k}}$ appear consecutively in $B$ (without being interleaved by subtransactions of other transactions).
The goal is to show that our sharding protocol generates a local chain system $L$ that has a blockchain serialization $B$.
We introduce the {\em shard-coherence property} which we will use to prove the existence of $B$.

\begin{definition}[Shard-coherence]
We say that transactions $T_i$ and $T_j$ are shard-coherent with respect to local chain system $L$ if whenever
two of their subtransactions are casually related as $T_{i,\alpha} \to_L T_{j,\beta}$,
then for any two of their conflicting subtransactions $T_{i,\gamma}$ and $T_{j,\gamma}$
it holds that $T_{i,\gamma} \prec_{L_\gamma} T_{j,\gamma}$.
The local chain system $L$ is shard-coherent if every pair of transactions
are shard-coherent.
\end{definition}

The following result shows that in order to build a blockchain serialization $B$ from a chain system $L$,
it suffices to prove that $L$ is shard-coherent. 
% (The proof of Proposition~\ref{prop:shard-coherent} is in Appendix \ref{appendix:prop}.)

\begin{proposition}
\label{prop:shard-coherent}
If a local chain system $L$ is shard-coherent, then $L$ has a blockchain serialization $B$.
\end{proposition}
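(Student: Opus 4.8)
The plan is to construct the blockchain serialization $B$ in two stages: first produce a valid serialization of $L$ that respects the causal relation $\to_L$, and then argue that shard-coherence lets us rearrange it so that each transaction's subtransactions appear consecutively. The natural starting point is to observe that if $L$ is shard-coherent then $L$ is valid (i.e.\ acyclic): a causal cycle $T_{i,\alpha}\to_L T_{i,\alpha}$ would, by repeatedly applying shard-coherence along the cycle, force a conflicting pair to satisfy $T_{i,\gamma}\prec_{L_\gamma} T_{i,\gamma}$, which is impossible since $\prec_{L_\gamma}$ is a strict order on a single local chain. So the transitive closure of $\to_L$ is a strict partial order on the set of all subtransactions, and any topological sort of it gives a valid serialization.

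The heart of the argument is upgrading a valid serialization to a \emph{blockchain} serialization. Here I would define a relation $\Rightarrow$ on whole transactions: $T_i \Rightarrow T_j$ iff some subtransaction of $T_i$ causes some subtransaction of $T_j$, i.e.\ $T_{i,\alpha}\to_L T_{j,\beta}$ for some $\alpha,\beta$. The key claim is that $\Rightarrow$ is acyclic on transactions. Suppose for contradiction there is a cycle $T_{i_1}\Rightarrow T_{i_2}\Rightarrow\cdots\Rightarrow T_{i_m}\Rightarrow T_{i_1}$. I want to derive a contradiction with validity of $L$ (already established) or directly with the strictness of some $\prec_{L_\gamma}$. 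The idea is that $T_{i_1}\Rightarrow T_{i_2}$ together with $T_{i_2}\Rightarrow T_{i_3}$ should compose: from $T_{i_1,\alpha}\to_L T_{i_2,\beta}$ and $T_{i_2,\gamma}\to_L T_{i_3,\delta}$, since $T_{i_2}$ has subtransactions $T_{i_2,\beta}$ and $T_{i_2,\gamma}$, the second bullet in the definition of the extended causal relation gives $T_{i_2,\beta}\to_L T_{i_3,\delta}$ when $\beta=\gamma$, and more generally one chains through $T_{i_2}$'s own subtransactions — this is exactly where the ``two subtransactions of the same transaction don't depend on each other but are tied together by the extension rule'' structure is used. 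Iterating around the cycle yields $T_{i_1,\alpha}\to_L T_{i_1,\alpha'}$ for two subtransactions of $T_{i_1}$; if $\alpha=\alpha'$ this contradicts validity directly, and if $\alpha\neq\alpha'$ one more application of the extension rule (using that $T_{i_1}$ has both) closes it to a genuine self-loop, again contradicting validity. Hence $\Rightarrow$ is a partial order on transactions.

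Given that $\Rightarrow$ is acyclic, pick any linear extension $T_{j_1},T_{j_2},\ldots,T_{j_\zeta}$ of it, and let $B$ be the concatenation of the subtransaction-blocks in that order, where within transaction $T_{j}$ the subtransactions are listed in any fixed order. By construction each transaction's subtransactions are consecutive, so the only thing left to check is that $B$ is a valid serialization, i.e.\ $T_{i,\alpha}\to_L T_{j,\beta}$ implies $T_{i,\alpha}\prec_B T_{j,\beta}$. If $i\neq j$ this is immediate: $T_{i,\alpha}\to_L T_{j,\beta}$ gives $T_i\Rightarrow T_j$, so $T_i$ precedes $T_j$ in the linear extension and hence the whole block of $T_i$ precedes the whole block of $T_j$ in $B$. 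The case $i=j$ cannot arise because $T_{i,\alpha}\to_L T_{i,\beta}$ with $\alpha\neq\beta$ extends to $T_{i,\alpha}\to_L T_{i,\alpha}$ (again using that $T_i$ owns both subtransactions), contradicting validity, and $\alpha=\beta$ is ruled out directly by validity. This establishes that $B$ is a blockchain serialization.

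I expect the main obstacle to be the composition/cycle-chasing step for $\Rightarrow$: getting the bookkeeping right when chaining causal relations through the \emph{intermediate} transactions in a cycle, and making precise the repeated use of the second bullet of the extended-causality definition to route a $\to_L$ edge through a transaction's own (mutually independent) subtransactions. The role of the shard-coherence hypothesis itself is mainly to guarantee validity of $L$ in the first place — so I should double-check whether validity is assumed separately or must be derived from shard-coherence; if the latter, the initial acyclicity lemma above (cycle $\Rightarrow$ forced self-$\prec_{L_\gamma}$ on a conflicting pair) is the place shard-coherence does real work, and it must be stated carefully so that every step of the cycle actually produces a conflicting pair to which the definition applies.
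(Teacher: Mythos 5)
Your proposal is correct in outline but takes a genuinely different route from the paper for the main construction. The paper factors the proposition into three steps: shard-coherence implies validity (via a longest-relation-path case analysis), a valid $L$ admits a valid serialization of the \emph{subtransactions} (built by incrementally inserting subtransactions and, in the hard case, shuffling a subsequence between the earliest effect and the latest cause), and finally a rearrangement lemma that moves each transaction's subtransactions to be consecutive. You instead lift the causal relation to whole transactions ($T_i \Rightarrow T_j$ iff some subtransaction of $T_i$ causes one of $T_j$), prove $\Rightarrow$ acyclic, and take a topological sort of transactions, emitting each transaction's subtransactions as one block; this collapses the paper's two serialization lemmas into a single, arguably cleaner argument, at the price of the cycle-chasing step you flag. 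That step does go through: any $\to_L$ edge presupposes a chain of same-shard conflicts between consecutive transactions (a decomposition the paper itself invokes in its safety proof), so a $\Rightarrow$-cycle yields a transaction-level cycle of base conflicts, and routing with the second extension rule (enter each intermediate transaction at the subtransaction conflicting with its predecessor) produces a subtransaction chain ending at $T_{i_1,\delta}$, where $\delta$ is the shard of the last conflict; choosing the chain's starting tail to be that same $T_{i_1,\delta}$ gives a genuine self-loop contradicting validity. One caveat on your phrasing: you cannot close $T_{i_1,\alpha}\to_L T_{i_1,\alpha'}$ with ``one more application of the extension rule,'' since both extension rules only relate subtransactions of \emph{different} transactions; the fix is the endpoint choice just described, applied to the first base conflict of the chain. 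Likewise, your first step (shard-coherence implies validity) should be argued via consecutive conflicting pairs along the decomposed cycle -- applying the definition to the pair $(T_i,T_i)$ is vacuous -- which yields both $U_{l,\delta_l}\prec_{L_{\delta_l}}U_{l+1,\delta_l}$ and the reverse, a contradiction; this is in fact a somewhat more direct argument than the paper's longest-path proof of its corresponding lemma.
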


% (The proof of Lemma~\ref{lemma:interleave} is in Appendix \ref{appendix:lemma}.)

The proof of Proposition \ref{prop:shard-coherent}
follows directly from Corollary \ref{corollary:valid-to-blockchain} and 
Lemma \ref{lemma:coherent-to-valid}
given below.
In the results below consider a local chain system $L = (L_1, \ldots, L_{w})$
for transactions $\T = \{T_1, T_2, \ldots, T_\zeta\}$.

\begin{lemma}
\label{lemma:valid-to-serializable}
If $L$ is a valid local chain system, then $L$ has a valid serialization.
\end{lemma}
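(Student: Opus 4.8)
The plan is to recognize Lemma~\ref{lemma:valid-to-serializable} as the finite case of the classical fact that an acyclic relation on a finite set admits a linear extension (equivalently, that every finite directed acyclic graph has a topological ordering). First I would set up the directed graph $G = (V, E)$ whose vertex set $V$ is the (finite) collection of all subtransactions $T_{i,\alpha}$ appearing in $L$ --- finiteness follows since $\T$ has $\zeta$ transactions and each has at most $w$ subtransactions --- and whose edge set $E$ consists of all pairs $(T_{i,\alpha}, T_{j,\beta})$ with $T_{i,\alpha} \to_L T_{j,\beta}$, where $\to_L$ denotes the transitive closure of the causal relation as fixed in the text. By hypothesis $L$ is valid, i.e.\ there is no subtransaction with $T_{i,\alpha} \to_L T_{i,\alpha}$; since $E$ is already transitively closed, the absence of self-loops is exactly the statement that $G$ contains no directed cycle.

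Next I would extract the serialization by the standard peeling argument, proceeding by induction on $|V|$. The base case $|V| = 0$ is the empty sequence. For the inductive step, I would argue that a nonempty finite DAG has a \emph{source}, i.e.\ a subtransaction $T_{i,\alpha}$ with no incoming edge in $E$: if every vertex had an incoming edge, walking backwards along in-edges from an arbitrary vertex would, by finiteness, eventually revisit a vertex and thereby exhibit a directed cycle, contradicting validity. Remove such a source $T_{i,\alpha}$ from $V$; the induced subgraph is again a finite DAG (deleting a vertex cannot create a cycle), so by the induction hypothesis its subtransactions can be arranged in a sequence $B'$ preserving all remaining edges. Then $B = T_{i,\alpha}, B'$ lists all subtransactions of $L$, and it preserves $\to_L$: every edge out of $T_{i,\alpha}$ is respected because $T_{i,\alpha}$ is placed first, no edge enters $T_{i,\alpha}$, and every edge among the remaining vertices is respected by $B'$. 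Hence $B$ is a valid serialization of $L$.

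I expect no deep obstacle here --- the content is essentially a textbook topological sort --- so the care needed is mainly bookkeeping with the definitions: confirming that ``valid'' gives acyclicity of the \emph{transitive closure} (so that a backward walk must close a cycle rather than merely a path), that the vertex set is genuinely finite, and that ``valid serialization'' requires every subtransaction to appear exactly once, which the peeling argument supplies automatically. If a more compact exposition is preferred, I would keep in reserve a direct appeal to Szpilrajn's extension theorem: the transitive closure of $\to_L$ is irreflexive by validity and transitive by construction, hence a strict partial order on a finite set, which therefore extends to a strict total order; any enumeration $B$ consistent with that total order is then a valid serialization.
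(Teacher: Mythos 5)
Your proof is correct, but it takes a genuinely different route from the paper. You reduce the statement to the classical fact that a finite acyclic (transitively closed) relation is a strict partial order and hence admits a linear extension, and you extract the serialization by the standard source-peeling topological sort (with Szpilrajn's theorem as a backup). The paper instead argues constructively and incrementally: it concatenates the local chains $L_1,\ldots,L_w$ into one sequence and inserts the subtransactions one at a time into a growing sequence $A'_\sigma$, maintaining by induction that $A'_\sigma$ is a valid serialization of the subsystem seen so far; the inductive step locates the earliest causal successor $a'_i$ and latest causal predecessor $a'_j$ of the new element in $A'_\sigma$ and performs a case analysis (append at the end if there is no successor, insert between $a'_j$ and $a'_i$ if $j<i$, and in the case $i<j$ reshuffle the block between them into pieces $s_1,s_2,s_3$ before inserting), with validity of $L$ ruling out the bad subcase $a'_i \to_L a'_j$. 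Both arguments hinge on the same observation you make explicit --- that validity is exactly acyclicity of the transitive closure --- but your version is shorter, more standard, and avoids the somewhat delicate ``move $s_2$ before $s_1$'' rearrangement that the paper declares clearly valid without detailed verification; the paper's version, in exchange, yields an explicit insertion procedure that respects the order in which subtransactions appear in the local chains. Your bookkeeping points (finiteness of the vertex set, each subtransaction appearing exactly once) are the right ones and are handled adequately.
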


\begin{proof}
Consider the sequence $A$ of subtransactions 
which is the concatenation of sequences 
$L_1, L_2, \ldots, L_w$.
Suppose that $A = a_1, a_2, \ldots, a_\delta$,
where $a_\sigma = T_{i_\sigma,j_\sigma}$
where $T_{i_\sigma,j_\sigma}$ is a subtransaction of transaction 
$T_{i_\sigma} \in \T$.
%$1 \leq \sigma \leq \zeta$, and $\zeta = |\wT|$.

From $A$ we incrementally build a sequence $A'$ which is a valid serialization of $L$.
Let $A'_\sigma$ denote the sequence that we obtain 
after we appropriately insert (as explained below) the $\sigma$th element of $A$ into $A'$.
We prove by induction that $A'_\sigma$ is a valid serialization of 
the involved subtransactions of the respective induced subsystem 
$L^\sigma$ of $L$ 
that consists of the $\sigma$ subtransactions of $L$ under consideration
(the subsystem $L^\sigma$ keeps from each $L_i$ the involved subtransactions; note that the subsystem is valid).
The main claim follows when we consider $\sigma = \zeta$
which gives $A' = A'_{\sigma}$.

For the basis case $\sigma = 1$, and $A'_1 = a_1$
which is trivially a valid serialization of the single subtransaction.
Suppose that we built $A'_\sigma$ which is a valid serialization 
of the first $\sigma$ subtransactions in $A$, where $\sigma < \zeta$.

In order to build $A'_{\sigma+1}$ we take $a_{\sigma+1}$
and insert it into $A'_{\sigma}$, as follows.
Suppose that $A'_{\sigma} = a'_1, \ldots, a'_{\sigma}$.
If there is no $a'_i \in A'_{\sigma}$ such that $a_{\sigma+1} \to_L a'_i$,
then append $a_{\sigma+1}$ at the end of $A'_{\sigma}$, to obtain $A'_{\sigma + 1}$,
which is clearly a valid serialization.

Otherwise, 
let $a'_i$ be the earliest subtransaction in $A'_{\sigma}$ 
($i$ is the smallest index within $A'_{\sigma}$) 
such that $a_{\sigma+1} \to_L a'_i$,
and let $a'_j$ be the latest subtransaction in $A'_{\sigma}$ 
($j$ is the largest index within $A'_{\sigma}$) 
such that $a'_j \to_L a_{\sigma+1}$.
We examine two cases:
\begin{itemize}
\item
$j < i$: in this case we append $a_{\sigma+1}$ 
just before $a'_i$ (and clearly after $a'_j$) in $A'_{\sigma}$ to obtain $A'_{\sigma + 1}$,
which gives a valid serialization.
\item
$i < j$: we examine three sub-cases as follows.
\begin{itemize}
    \item $a'_i \to_L a'_j$: this case is impossible since this would create a cycle $a_{\sigma+1} \to_L a_{\sigma+1}$ in the causal relation $\to_L$, and hence, $L$ would not be valid, contradicting the assumptions.  
    
    \item $a'_j \to_L a'_i$: since $a'_i \prec_{A'_{\sigma}} a'_j$ this would imply that in $A'_{\sigma}$ is not a valid serialization of the involved subtransactions of $A$, which contradicts the induction hypothesis.
    
    \item {\em $a'_j$ and $a'_i$ are not related by $\to_L$ to one another}:
    consider the subsequence $s$ of $A'_{\sigma}$ from $a'_i$ to $a'_j$ (including $a'_i$ and $a'_j$). 
    Let $s_1$ be the subsequence of $s$ that includes all $a'_q$ such that $a'_i \to_L a'_q$;
    let $s_2$ be the subsequence of $s$ that includes all $a'_q$ such that $a'_q \to_L a'_j$;
    let $s_3$ be the remaining elements of $s$.
    Note that $s_1$ and $s_2$ are disjoint, 
    since otherwise $a'_i \to_L a'_j$.
    Next, we move all the elements in the sequence $s_2$ (keeping their relative order) to be before the first element in $s_1$. Moreover, 
    add $a_{\sigma+1}$ between the last element of $s_2$ and the first element of $s_1$.
    The resulting sequence $A'_{\sigma+1}$
    is clearly a valid serialization of the involved subtransactions.
\end{itemize}
\end{itemize}
\end{proof}

\begin{lemma}
\label{lemma:serializable-to-blockchain}
If the local chain system $L$ has a valid serialization, then $L$ is blockchain serializable.
\end{lemma}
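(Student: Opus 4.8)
The idea is to reduce the statement to a combinatorial fact about the causal relation $\to_L$ and then apply a standard topological‑sort argument. First I would note that ``$L$ has a valid serialization'' already forces $L$ to be valid: if $T_{i,\alpha}\to_L T_{i,\alpha}$ held for some subtransaction, then any valid serialization $A'$ would have to satisfy $T_{i,\alpha}\prec_{A'}T_{i,\alpha}$, which is impossible. So from now on I may treat $L$ as valid, i.e.\ $\to_L$ is a strict partial order on the set of all subtransactions; the given serialization $A'$ will only be used (if at all) as a convenient device for ordering transactions.

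The core of the proof is a structural observation about the base edges that generate $\to_L$ (the within‑shard edges $\to_{L_\alpha}$ together with the two cross‑chain extension rules). I would show: (a) every base edge connects subtransactions of two \emph{distinct} transactions, since $T_{i,\alpha}\to_{L_\alpha}T_{j,\alpha}$ presupposes $T_{i,\alpha}\prec_{L_\alpha}T_{j,\alpha}$ and hence $i\neq j$, and both extension rules are explicitly powered by such a within‑shard edge; and (b) whenever there is a base edge from a subtransaction of $T_x$ to a subtransaction of $T_y$, there is a shard $\gamma$ with $T_{x,\gamma}\to_{L_\gamma}T_{y,\gamma}$ --- for a within‑shard edge this is immediate; for an edge $T_{x,\gamma}\to_L T_{y,\beta}$ produced by the first extension rule the witness is $T_{x,\gamma}\to_{L_\gamma}T_{y,\gamma}$; for an edge $T_{x,\alpha}\to_L T_{y,\beta}$ produced by the second extension rule the witness is $T_{x,\beta}\to_{L_\beta}T_{y,\beta}$. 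I then define a relation on transactions by $T_x\lhd T_y$ iff $T_{x,\gamma}\to_{L_\gamma}T_{y,\gamma}$ for some shard $\gamma$.

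The key claim --- and the step I expect to be the main obstacle --- is that \emph{$\lhd$ is acyclic} whenever $L$ is valid. Suppose there were a cycle $T_{x_0}\lhd T_{x_1}\lhd\cdots\lhd T_{x_{m-1}}\lhd T_{x_0}$ with witnessing shards $\gamma_0,\dots,\gamma_{m-1}$ (indices mod $m$), so $T_{x_l,\gamma_l}\to_{L_{\gamma_l}}T_{x_{l+1},\gamma_l}$. Since $\gamma_{l+1}$ is a shard in which $T_{x_{l+1}}$ participates, the first extension rule applied to $T_{x_l,\gamma_l}\to_{L_{\gamma_l}}T_{x_{l+1},\gamma_l}$ yields $T_{x_l,\gamma_l}\to_L T_{x_{l+1},\gamma_{l+1}}$. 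Chaining these around the cycle gives $T_{x_0,\gamma_0}\to_L T_{x_0,\gamma_0}$, contradicting validity of $L$ (for $m=1$ the contradiction is immediate, since $T_{x_0,\gamma}\to_{L_\gamma}T_{x_0,\gamma}$ cannot hold). Here I would be careful to verify that the first extension rule genuinely promotes one within‑shard edge into an edge targeting \emph{every} subtransaction of the target transaction, as that is exactly what makes the chaining go through.

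With $\lhd$ acyclic the construction of $B$ is routine. Fix a total order $T_{j_1},\dots,T_{j_\zeta}$ of the transactions extending $\lhd$; within each $T_i$ fix an ordering $\sigma_i$ of its subtransactions extending $\to_L$ restricted to $T_i$ (possible since $L$ is valid), and let $B$ be the concatenation $\sigma_{j_1}\sigma_{j_2}\cdots\sigma_{j_\zeta}$. By construction the subtransactions of each transaction appear consecutively, so it remains only to check $B$ is a valid serialization. Given $T_{i,\alpha}\to_L T_{j,\beta}$: if $i=j$ then $\alpha\neq\beta$ and $\sigma_i$ orders them correctly; if $i\neq j$, expand the edge into a path of base edges $u_0,u_1,\dots,u_k$ with $u_0=T_{i,\alpha}$, $u_k=T_{j,\beta}$, and by (a)--(b) consecutive $u_l$ lie in distinct transactions with (transaction of $u_l$) $\lhd$ (transaction of $u_{l+1}$); this is a $\lhd$‑walk from $T_i$ to $T_j$, hence (by acyclicity, since $i\neq j$) a $\lhd$‑path, so the whole block of $T_i$ precedes the whole block of $T_j$ in $B$ and $T_{i,\alpha}\prec_B T_{j,\beta}$. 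As an alternative to choosing an abstract extension of $\lhd$, one can instead order the transactions by the position of their first subtransaction in the given serialization $A'$: the same rule‑(i) argument shows this order refines $\lhd$, so it works equally well and makes explicit use of the hypothesis.
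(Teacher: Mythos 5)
Your proof is correct, but it takes a genuinely different route from the paper's. The paper keeps the given valid serialization $A$ and regroups it in place: for each transaction $T_i$ it moves all of $T_i$'s subtransactions to sit consecutively at the position of $T_i$'s earliest subtransaction in $A$, justifying each move by the claim that anything causing some subtransaction of $T_i$ already precedes all subtransactions of $T_i$ in $A$, and then iterates over transactions. You instead use the hypothesis only to conclude that $L$ is valid, lift the causal structure to the transaction level via the relation $\lhd$ (witnessed by within-shard edges), prove $\lhd$ acyclic by chaining the first extension rule around a hypothetical cycle, and output $B$ as a concatenation of per-transaction blocks along a linear extension of $\lhd$, checking cross-transaction edges by decomposing them into base edges. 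What each buys: the paper's argument is shorter and stays close to the serialization produced by Lemma~\ref{lemma:valid-to-serializable}, whereas yours in effect proves the stronger implication ``valid $\Rightarrow$ blockchain serializable,'' so it yields Corollary~\ref{corollary:valid-to-blockchain} in one step without Lemma~\ref{lemma:valid-to-serializable}. Your version also makes explicit the structural facts the regrouping argument relies on only implicitly --- that every base edge between two transactions has a same-shard witness, and that only rule-(i) edges fan out to \emph{all} subtransactions of the target transaction. That last point is worth keeping in your write-up: a rule-(ii) edge $T_{y,\alpha}\to_L T_{i,\beta}$ does not by itself give $T_{y,\alpha}\to_L T_{i,\delta}$ for other shards $\delta$ of $T_i$, so the paper's intermediate claim that a causer of one subtransaction of $T_i$ precedes every subtransaction of $T_i$ in an arbitrary valid serialization requires more care than its one-line justification suggests; your transaction-level construction sidesteps this issue entirely, which is a genuine advantage of your approach.
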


\begin{proof}
Let $A$ be a valid serialization of $L$.
Suppose that $A = a_1, a_2, \ldots, a_\delta$,
where $a_\sigma = T_{i_\sigma,j_\sigma}$ 
and $T_{i_\sigma,j_\sigma}$ is a subtransaction of $T_{i_\sigma} \in \T$.
%$1 \leq \sigma \leq \zeta$, and $\zeta = |\wT|$.

We will rearrange the subtransactions in $A$ to a new sequence $A'$
such that each transaction $T_i$ has its subtransactions 
consecutively in $A'$.
We will show how to do the transformation for a single transaction,
and this can repeat for the remaining transactions.

For a transaction $T_i$ let $T_{i, j_1}, \ldots, T_{i, j_q}$ denote its subtransactions,
with respective positions $a_{s_1}, \ldots, a_{s_q}$ in $A$.

From the validity of $A$ and transitivity of $\to_L$,
we have that if for some $l \in [q]$, $a_{j} \to_L a_{s_l}$, then $a_{j} \prec_A a_{s_{l'}}$, for every $l' \in [q]$.
Hence, if $a_{s_l}$ is the earliest subtransaction of $T_i$ in $A$ 
(i.e. $s_l$ has the smallest index among those with $l \in [q]$),
any $a_j$ that causes (through $\to_L$) any of the subtransactions of $T_i$
must appear in $A$ before $a_{s_l}$.
Therefore, we can move the subtransactions of $T_i$ 
and arrange them to appear consecutively starting 
at the position of $a_{s_l}$,
so that $a_{s_1}$ will take the place of $a_{s_l}$,
$a_{s_2}$ will appear immediately after $a_{s_1}$, 
and so on, until $a_{s_q}$.

Let $A'$ be the resulting sequence after we rearrange the subtransactions of $T_i$. 
Clearly, this transformation of $A$ has preserved its validity and also the 
subtransactions of $T_i$ appear consecutively in $A'$.
By repeating this process for each remaining transaction we obtain the final $A'$.
By induction (on the number of transactions),
it is clear that the final $A'$ is a blockchain serialization of $L$. 
\end{proof}

From Lemmas \ref{lemma:valid-to-serializable}
and \ref{lemma:serializable-to-blockchain}
we obtain the following corollary.

\begin{corollary}
\label{corollary:valid-to-blockchain}
A valid local chain system $L$ is blockchain serializable.
\end{corollary}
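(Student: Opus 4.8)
The plan is to obtain Corollary~\ref{corollary:valid-to-blockchain} by directly composing the two preceding lemmas, since the conclusion of each is exactly the hypothesis of the next. First I would invoke Lemma~\ref{lemma:valid-to-serializable}: from the assumption that $L$ is a valid local chain system (i.e.\ there is no subtransaction $T_{i,\alpha}$ with $T_{i,\alpha} \to_L T_{i,\alpha}$ in the transitive closure), that lemma produces a valid serialization $A$ of $L$ --- a single linear ordering of \emph{all} subtransactions of $\T$ that respects the causal relation $\to_L$.

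Next I would feed this $A$ into Lemma~\ref{lemma:serializable-to-blockchain}: since $L$ now demonstrably has a valid serialization, that lemma rearranges $A$ into a sequence $B$ in which additionally the subtransactions of each transaction $T_i$ occur consecutively (uninterleaved by subtransactions of other transactions), while still preserving $\to_L$. By the definition of blockchain serialization, such a $B$ witnesses that $L$ is blockchain serializable, which is exactly the statement of the corollary.

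I do not expect a genuine obstacle: all the substantive work already resides inside the two lemmas --- the element-by-element inductive insertion that builds the valid serialization in Lemma~\ref{lemma:valid-to-serializable}, and the block-grouping rearrangement in Lemma~\ref{lemma:serializable-to-blockchain}. The only point that needs a moment's care is the bookkeeping that the object delivered by the first lemma is precisely the object the second lemma consumes --- ``$L$ has a valid serialization'' is an existential statement, and Lemma~\ref{lemma:serializable-to-blockchain} takes exactly such a witness as input --- so the composition is immediate. Thus the proof reduces to a one-line chain of implications: $L$ valid $\Rightarrow$ $L$ has a valid serialization (Lemma~\ref{lemma:valid-to-serializable}) $\Rightarrow$ $L$ is blockchain serializable (Lemma~\ref{lemma:serializable-to-blockchain}).
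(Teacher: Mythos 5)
Your proposal is correct and matches the paper exactly: the corollary is obtained by chaining Lemma~\ref{lemma:valid-to-serializable} (validity of $L$ yields a valid serialization) with Lemma~\ref{lemma:serializable-to-blockchain} (a valid serialization can be rearranged into a blockchain serialization). The paper gives no further argument beyond this composition, so your one-line chain of implications is precisely its proof.
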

% \begin{lemma}
% \label{lemma:shard-coherent}
% If a local chain system $L$ is shard-coherent, then $L$ has a valid blockchain serialization $B$.
% \end{lemma}

\begin{lemma}
\label{lemma:coherent-to-valid}
If a local chain system $L$ is shard-coherent, then $L$ is valid.
\end{lemma}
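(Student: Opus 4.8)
The plan is to prove the contrapositive: assuming $L$ is not valid, I will exhibit a pair of transactions that are not shard-coherent, so $L$ is not shard-coherent. If $L$ is not valid there is a subtransaction with $T_{i,\alpha} \to_L T_{i,\alpha}$, i.e.\ a cycle in the transitive closure of $\to_L$. Unrolling this cycle gives a sequence of subtransactions $T_{i^{(1)},\alpha_1} \to_L T_{i^{(2)},\alpha_2} \to_L \cdots \to_L T_{i^{(m)},\alpha_m} \to_L T_{i^{(1)},\alpha_1}$ where each arrow is a \emph{primitive} causal step coming either from a single local chain (same-shard conflict edge $\to_{L_\gamma}$) or from the ``same-transaction'' bridging rule in the definition of $\to_L$ across shards. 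I would first argue that the transactions $T_{i^{(1)}}, \ldots, T_{i^{(m)}}$ appearing in the cycle cannot all be the same transaction index: a single-transaction cycle would have to use only same-shard edges $T_{i,\gamma} \to_{L_\gamma} T_{i,\gamma}$, which is impossible since $\to_{L_\gamma}$ orders \emph{distinct} subtransactions (a subtransaction cannot precede itself in a local chain). So at least two distinct transaction indices occur.

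Next I would use shard-coherence to derive, for every pair of consecutive transaction indices around the cycle, an ordering of \emph{all} their conflicting subtransaction pairs in the relevant local chains, and show this forces a cyclic ordering that contradicts the acyclicity of each local chain $L_\gamma$ (local chains are genuine sequences, hence $\prec_{L_\gamma}$ is a strict total order on the subtransactions they contain). Concretely: pick two indices $i, j$ that are ``causally adjacent'' in the cycle, meaning $T_{i,\alpha} \to_L T_{j,\beta}$ for some $\alpha,\beta$. If $T_i$ and $T_j$ share a conflicting pair of subtransactions in some shard $\gamma$, shard-coherence gives $T_{i,\gamma} \prec_{L_\gamma} T_{j,\gamma}$. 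Following the cycle all the way around, and choosing for each adjacent pair a common shard where they conflict, I want to conclude $T_{i^{(1)},\gamma} \prec \cdots \prec T_{i^{(1)},\gamma}$ in some single local chain, a contradiction. The subtlety is that consecutive pairs need not conflict in the \emph{same} shard, so the naive chaining does not immediately land back in one local chain.

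The cleanest way around this is to contract the cycle down to two transactions. I would show: if $T_{i,\alpha}\to_L T_{j,\beta}$ and $T_{j,\beta'}\to_L T_{k,\beta''}$ (with $T_j$ really participating, possibly $i=k$), then shard-coherence lets me ``splice out'' $T_j$ and obtain a causal relation between a subtransaction of $T_i$ and a subtransaction of $T_k$ that still respects the local-chain orders, and iterate until only $T_{i^{(1)}}$ and possibly one other transaction remain in a cycle of length $\le 2$; a cycle of length $1$ is impossible as noted, and a length-$2$ cycle $T_i \to_L T_j \to_L T_i$ means there exist shards with $T_{i,\gamma}\prec_{L_\gamma} T_{j,\gamma}$ and $T_{j,\gamma'}\prec_{L_{\gamma'}} T_{i,\gamma'}$ while simultaneously $T_{i,\alpha}\to_L T_{j,\beta}$ \emph{and} $T_{j,\beta}\to_L T_{i,\alpha}$, so shard-coherence applied in \emph{both} directions to the \emph{same} conflicting pair yields $T_{i,\gamma}\prec_{L_\gamma} T_{j,\gamma}$ and $T_{j,\gamma}\prec_{L_\gamma} T_{i,\gamma}$, contradicting totality of $\prec_{L_\gamma}$.

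The main obstacle I anticipate is exactly this reduction of an arbitrary-length cycle to a two-transaction cycle: making precise what ``$T_j$ participates'' means at each splice step and verifying that splicing preserves a genuine $\to_L$ relation (so that shard-coherence remains applicable) requires careful bookkeeping of whether each primitive step is a same-shard conflict edge or a same-transaction bridge, and handling the case where several consecutive subtransactions belong to the same transaction. An alternative route that sidesteps the reduction is a direct extremal argument: among all subtransactions on the cycle, consider one that is $\prec_{L}$-minimal in an appropriate sense, or pick the shard $\gamma$ and use shard-coherence to show the cycle induces a back-edge within $L_\gamma$; but I expect the two-transaction contraction to be the most transparent, so that is the approach I would write up, being careful to state and prove the single splicing lemma cleanly before iterating it.
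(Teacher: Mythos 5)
Your overall strategy---argue the contrapositive, unroll the cycle into primitive causal steps, rule out a one-transaction cycle, and then play shard-coherence against a same-shard order---is sound in spirit, and it does differ from the paper's proof, which instead takes a longest relation path through the repeated subtransaction and uses maximality to force two \emph{adjacent} subtransactions of the cycle to conflict in a common shard before deriving the coherence violation. But your plan has a genuine gap exactly where you flag it: the splicing lemma is never proved, and your length-two endgame rests on an unjustified assertion. A transitive-closure two-cycle $T_{i,\alpha}\to_L T_{j,\beta}$ and $T_{j,\beta}\to_L T_{i,\alpha}$ does not by itself give ``shards with $T_{i,\gamma}\prec_{L_\gamma}T_{j,\gamma}$ and $T_{j,\gamma'}\prec_{L_{\gamma'}}T_{i,\gamma'}$'': shard-coherence is an implication whose conclusion quantifies over \emph{conflicting} subtransaction pairs, so it is vacuous for two transactions that have no conflicting subtransactions in any common shard, and nothing in your contraction guarantees that the two surviving transactions conflict anywhere. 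That guarantee comes only from retaining a primitive edge, and tracking it through repeated splicing is precisely the bookkeeping you leave open. Note also that the splicing itself needs neither shard-coherence nor the $\beta\ne\beta'$ generality: the cycle is a subtransaction-level path, so the target of each primitive edge is the source of the next, and the paper's $\to_L$ is already taken to be transitively closed, so contraction is immediate.

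The repair is short. Pick any primitive edge of the cycle, say between subtransactions of $T_i$ and $T_j$ (necessarily $i\ne j$, since every primitive edge joins distinct transactions); by definition it is witnessed by a conflict in some shard $S_\delta$ with $T_{i,\delta}\prec_{L_\delta}T_{j,\delta}$. Transitivity over the remaining edges of the cycle yields a causal relation from a subtransaction of $T_j$ back to a subtransaction of $T_i$, and shard-coherence applied to that relation forces $T_{j,\delta}\prec_{L_\delta}T_{i,\delta}$ for the same conflicting pair, a contradiction. With this observation your route goes through and is in fact leaner than the paper's longest-path case analysis; without it, the central step of your plan is missing.
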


\begin{proof}
Suppose that $L$ is shard-coherent.
Suppose for the sake of contradiction that there is subtransaction $T_{i,j}$
such that $T_{i,j_k} \to_L T_{i,j_k}$
(that is, there is a cycle in $L$ with respect to causal relation $\to_L$).

Let $p = a_1, a_2, \ldots, a_\ell$ be a transitive ``relation path'', 
where each node in $a_i$ is a subtransaction of some transaction in $\T$
and $a_1 = a_\ell = T_{i,j_k}$,
and $a_i \to_L a_{i+1}$, for each $1 \leq i < \ell$.
Among all possible relation paths starting and ending to $T_{i,j_k}$,
let $p$ be the longest (and if there are multiple paths of the same longest length
then pick one of them arbitrarily).
Note that it has to be $\ell > 2$ 
since a subtransaction alone by itself cannot create 
cyclic dependencies.
%Moreover, by assumption $a_\ell \to a_1$.

First, consider the case where each subtransaction in $p$ 
is in the same shard $S_{\alpha} = S_{j_k}$ as that of $T_{i,j_k}$.
We consider two sub-cases:
\begin{itemize}
\item $a_{1} \prec_{L_{\alpha}} a_{2}$:
let $a_r$, $1 < r < \ell$, have the largest index $r$ such that $a_{1} \prec_{L_{\alpha}} a_{r}$.
Then clearly, $a_{r+1} \prec_{L_{\alpha}} a_{r}$ (note that $a_{r+1}$ exists since we took $r < \ell$ and also it holds $\ell > 2$).
However, since $a_{r} \to_L a_{r+1}$, the shard-coherence property of $L$ is violated between $a_r$ and $a_{r+1}$,
a contradiction.
\item $a_{2} \prec_{L_{\alpha}} a_{1}$:
since $a_{1} \to_L a_{2}$, the shard-coherence of $L$ is violated between $a_1$ and $a_2$,
a contradiction.
\end{itemize}

Next, consider the case where some subtransaction in $p$ is in
a different shard than $S_{\alpha}$.
Let $a_r$, where $1 \leq r < \ell$ 
be the first subtransaction (with the smallest index $r$) in $p$
which is in a different shard, say $S_{\beta}$, where $\alpha \neq \beta$.

We now show that $a_{r+1}$ must also be in $S_{\beta}$
conflicting with $a_r$.
Suppose to the contrary that $a_{r+1}$ is not in $S_{\beta}$.
Since $a_r \to_L a_{r+1}$,
there must be a subtransaction $T'$ in $S_{\beta}$
which conflicts with $a_r$,
such that $a_{r} \to_{L_\beta} T'$ and $T' \to_L a_{r+1}$.
However, this implies that $p$ can be augmented with $T'$,
which is a contradiction since $p$ is the longest relation path.
Thus, $a_{r+1}$ is in $S_{\beta}$.
Moreover, $a_{r+1}$ must be conflicting with $a_r$,
since otherwise we would find as above some other transaction $T'$ 
that conflicts with $a_{r}$ which could be inserted into $p$ to increase its length.
We examine two cases:
\begin{itemize}
    \item $a_r \prec_{L_{\beta}} a_{r+1}$: 
    from the cyclicity of path $p$ we have that $a_{r+1} \to_L a_{r}$ (going through $a_1$). 
    Hence, the shard-coherency is violated between $a_r$ and $a_{r+1}$,
    a contradiction.
    \item $a_{r+1} \prec_{L_{\beta}} a_{r}$:
    from $p$ we have that $a_{r} \to_L a_{r+1}$. 
    Hence, the shard-coherency is violated between $a_r$ and $a_{r+1}$,
    a contradiction.
\end{itemize}
\end{proof}

% \begin{corollary}
% \label{corollary:blockchain-serializable}
% If a local chain system $L$ is shard-coherent, then $L$ is blockchain serializable.
% \end{corollary}

% \begin{lemma}
% \label{lemma:shard-coherent}
% If a local chain system $L$ is shard-coherent, then $L$ has a valid blockchain serialization $B$.
% \end{lemma}
Next, we continue to show that in our sharding protocol two transactions that conflict
in the same shard, they cannot have some of their phases interleave.
\begin{lemma}
\label{lemma:interleave}
If two transactions $T_i$ and $T_j$ conflict in a destination shard $S_\gamma$,
and their respective subtransactions are processed concurrently by $S_\gamma$ 
so that they both go past phase 2 in $S_\gamma$ concurrently,
then at least one of the two transactions will restart or rollback.
\end{lemma}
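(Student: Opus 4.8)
The plan is to argue by contradiction: assume that \emph{neither} $T_i$ nor $T_j$ ever restarts or rolls back, and derive a contradiction from the phase-4 commit test together with the object-versioning discipline. Under this assumption, neither of the $S_\gamma$-subtransactions ever receives a ``restart'' or a ``force rollback'', so each of $T_{i,\gamma}$ and $T_{j,\gamma}$ that does not abort must take the positive (``append to $L_\gamma$ / committed'') branch in phase 4 --- the only other phase-4 outcome is a ``restart vote'', which would force a restart --- and then reach phase 6 and be released.

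First I would fix the bookkeeping. Let $O_d$ be the object on which the two subtransactions conflict; by the definition of conflict at least one of them, WLOG $T_{i,\gamma}$, writes $O_d$, so $T_{i,\gamma}\in W(O_d)$. ``Going past phase 2 concurrently'' means there is an instant $t_0$ at which both subtransactions sit in $R(O_d)$ simultaneously (each having passed its phase-2 constraint check and been inserted), with $T_{i,\gamma}$ also in $W(O_d)$. I would then record the structural facts used throughout: (i) under the hypothesis a subtransaction leaves $R(O_d)$ or $W(O_d)$ only via a ``release'', which happens strictly after that subtransaction itself committed in phase 4; (ii) a writer's release creates version $V_d+1$ and versions are monotone, so the version a subtransaction records in phase 2 is strictly smaller than the version after any later writer-release; (iii) the phase-4 positive branch, when not entered via the $T_{i,j}=T''_l$ clause, requires for a reader that $W(O_d)=\emptyset$, for a writer that $R(O_d)\setminus\{T_{i,j}\}=\emptyset$, and in both cases that the recorded version be unchanged; (iv) entering the positive branch via the $T_{i,j}=T''_l$ clause triggers a ``force rollback'' to the leader of every transaction then in $W(O_d)$.

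The case analysis then splits on whether either subtransaction is $T''_l$. If neither is: for the writer $T_{i,\gamma}$ to commit it needs $R(O_d)\setminus\{T_{i,\gamma}\}=\emptyset$ at its phase-4 test, so the partner $T_{j,\gamma}$, which was in $R(O_d)$ at $t_0$, must have left $R(O_d)$ by then, hence (by (i)) via a release, hence after its own commit; if $T_{j,\gamma}$ is also a writer, that release bumped the version (by (ii)) and $T_{i,\gamma}$'s version check fails --- contradiction; if $T_{j,\gamma}$ is a reader only, then symmetrically for $T_{j,\gamma}$ to commit it needs $W(O_d)=\emptyset$, so the writer $T_{i,\gamma}$ must have left $W(O_d)$ via a release beforehand, which again bumped the version and fails $T_{j,\gamma}$'s version check. (Degenerate sub-cases in which a phase-4 test runs while the partner is still in the relevant set fail the set condition outright, again a restart vote.) If, say, $T_{i,\gamma}=T''_l$: when $T_{i,\gamma}$ enters the positive branch it force-rolls-back every member of $W(O_d)$; if $T_{j,\gamma}$ is a writer still in $W(O_d)$ at that instant, $T_j$ rolls back (done), and otherwise $T_{j,\gamma}$ had already left $W(O_d)$, only possible via a prior release, so $T_{j,\gamma}$ already committed --- but then at $T_{j,\gamma}$'s own phase-4 test $T_{i,\gamma}$ was still in $R(O_d)$ (active until its own later release), so the writer $T_{j,\gamma}\neq T''_l$ fails $R(O_d)\setminus\{T_{j,\gamma}\}=\emptyset$ and restarts; the last possibility, $T_{j,\gamma}$ a reader and $T_{i,\gamma}=T''_l$ a writer, is closed by (ii) exactly as the reader/writer sub-case. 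The case $T_{j,\gamma}=T''_l$ is symmetric. Every branch contradicts the hypothesis, so at least one of $T_i,T_j$ restarts or rolls back.

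The hard part will be the $T''_l$ branch: because it deliberately bypasses the version check, the clean ``a writer-release bumps the version'' argument no longer closes the case, and one must instead combine the forced-rollback step (iv) with a careful ordering argument on when each subtransaction vacates $W(O_d)$ or $R(O_d)$ relative to the other's phase-4 test. A secondary nuisance is that $R(O_d)$ and $W(O_d)$ are mutated over time, so every membership claim must be anchored to an explicit instant ($t_0$, a phase-4 test, or a release), and the reader/writer asymmetry of the phase-4 predicate forces the split into the reader-only and writer sub-cases; transactions that abort in phase 3 because a constraint failed are simply discarded and need no separate treatment.
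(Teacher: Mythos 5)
Your overall strategy is the same as the paper's---the phase-4 read/write-set and version checks plus the lowest-ID force-rollback path---just recast as a proof by contradiction with explicit bookkeeping of releases and versions, and in that sense it is a more careful rendering of the paper's (very terse) argument. However, there are two concrete problems. First, fact (iii) misreads the phase-4 guard: for a writer the algorithm's condition is a \emph{disjunction}, $(W(O_d)\setminus\{T_{i,j}\}=\emptyset)$ \emph{or} $(T_{i,j}\in W(O_d)$ and $R(O_d)\setminus\{T_{i,j}\}=\emptyset)$, so a writer may legitimately commit while other readers are still registered in $R(O_d)$; your claim that a committing writer ``needs $R(O_d)\setminus\{T_{i,\gamma}\}=\emptyset$'' is not a necessary condition, and several of your sub-cases lean on it.

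Second, and more seriously, the recurring inference ``the partner is absent from the relevant set at my phase-4 test, hence it must have left via a release, hence it committed while I was registered / hence the version was bumped'' is invalid under your own anchoring of concurrency (the instant $t_0$ at which both sit in $R(O_d)$). The partner's phase-4 test may precede your phase 2 entirely. Concretely: a reader-only $T_{j,\gamma}$ passes phase 2 and then phase 4 (commits); the writer $T_{i,\gamma}$ then passes phase 2, so at that instant both are in $R(O_d)$ and your $t_0$ exists; $T_{j,\gamma}$ is then released, which does \emph{not} bump the version since it is not in $W(O_d)$; finally $T_{i,\gamma}$ passes phase 4 via $W(O_d)\setminus\{T_{i,\gamma}\}=\emptyset$ with the version unchanged. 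Neither transaction restarts or rolls back, so under your formalization the statement you set out to prove is actually false, and the flawed step is exactly the ``must have left $W(O_d)$ via a release beforehand'' inference (the writer simply was not yet in $W(O_d)$ at the reader's test). The analogous schedule with $T_{j,\gamma}$ a writer that commits and releases before $T_{i,\gamma}=T''_l$ reaches phase 4 likewise defeats your lowest-ID branch, since $T_{j,\gamma}$ has already vacated $W(O_d)$ before the force-rollback sweep and its own phase-4 test ran before $T_{i,\gamma}$ entered $R(O_d)$. The repair is to read the hypothesis the way the paper actually uses the lemma in the safety theorem: the two subtransactions are concurrent in the sense that neither completes its phase-6 release before the other reaches its phase-4 test. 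Under that reading the paper's direct argument closes every case: the later tester either still sees the writer in $W(O_d)$ (or, if it is a writer, sees the other subtransaction still in $R(O_d)$), or sees a version bumped by the earlier writer's release, and therefore sends a restart vote---unless it equals $T''_l$, in which case it commits and forces the other transaction to roll back.
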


\begin{proof}
% If they interleaved then one would abort the other, and hence one of them would not commit.
% Our algorithm takes a snapshot of the object at the time of accessing and when the transaction is about to commit, our algorithm checks whether that object is accessed by another transaction or not. So if there are conflicting transactions then our algorithm gives priority to execute the transaction to the earliest transaction and the other transaction are restarted or rolled back which is handled by phase $4$ of our algorithm and that transaction starts to be re-executed from phase $1$ so at next time that may be committed. So if two transactions $T_i$ and $T_j$ conflict in a destination shard then one would restart or rollback, and hence one of them would not commit, and that will retry to commit next time. Therefore it is not possible to have two transactions $T_i$ and $T_j$ conflict in a destination shard $\gamma$,
% and they both commit and their phases $4$ and $6$ are   interleaved.

Suppose transactions $T_i$ and $T_j$ have respective subtransactions $T_{i,\gamma}$ and $T_{j,\gamma}$ in $S_\gamma$. Moreover, suppose that these subtransactions conflict in $S_\gamma$ 
by accessing the same object $O_d$ and at least one of two of them is updating $O_d$.

Without loss of generality assume that $T_{i,\gamma}$ is updating $O_d$.
Suppose that $T_{i,\gamma}$ has finished executing phase $2$.
Hence, $T_{i,\gamma}$ has been added to write set $W(O_d)$. 
Then when $T_{j,\gamma}$ reaches phase 4, it will observe that $T_{i,\gamma}$
is already in $W(O_d)$ which will force $T_{j,\gamma}$ to restart.

On the other hand, if $T_{j,\gamma}$ has the lowest $ID$ then when 
$T_{j,\gamma}$ will reach phase $4$ it will force $T_{i,j}$ to rollback.
In either case, one of the two transactions will either restart or rollback.

% Consider first the case subtransaction $T_{i,\gamma}$ is in between phases $4$ and $6$ in $S_\gamma$  and it updates $O_d$. That means that $T_{i,\gamma}$ has taken the version snapshot of the object $O_d$, and $T_{i,\gamma}$ is added in the write set $W(O_d)$ which is not released yet. 

% If the subtransaction $T_{j,\gamma}$ arrives at phase $4$ of $S_{\gamma}$ after $T_{i,\gamma}$ does, then it observes the conflict in $O_d$ by checking the state of $W(O_d)$ that includes $T_{i,\gamma}$. This will force $T_{i,\gamma}$ to restart.
% If $T_{j,\gamma}$ arrive at phase 4 before $T_{i,\gamma}$
% and $T_{j,\gamma}$ is not updating $O_d$.

% So if the transaction $T_i$ is generated earlier than the transaction $T_j$ (i.e. $T_i < T_j$) then phase $4$ of our algorithm gives the priority to commit the transaction $T_i$ and it restarts the transaction $T_j$, otherwise (i.e in case of $T_i>T_j$), our protocol rollback the transaction $T_i$ and commit the transaction $T_j$. 

% Similarly, if we assume that the transaction $T_j$ is in between phase $4$ and phase $6$ and then transaction $T_i$ arrives in phase $4$ then one would restart or rollback the other, and hence one of them would not commit and restarted or rollbacked transaction will retry to commit next time. Therefore it is not possible to have two transactions $T_i$ and $T_j$ conflict in a destination shard $\gamma$,
% and they both commit and their committing phases $4$ and $6$ are interleaved.
\end{proof}

\begin{theorem}[Safety]
\label{theorem:safety}
The local chain system $L$ produced by our protocol has a blockchain serialization $B$.
\end{theorem}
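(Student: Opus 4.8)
The plan is to reduce everything to Proposition~\ref{prop:shard-coherent}: since that proposition already packages validity and blockchain serializability, it suffices to prove that the local chain system $L$ produced by the protocol is \emph{shard-coherent}. So the real work is to verify the shard-coherence property for every pair of distinct transactions $T_i, T_j$ whose subtransactions survive in $L$.

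The device I would introduce is a \emph{commit point} $c_i$ for each transaction $T_i$ present in $L$: the instant at which the leader shard of $T_i$ sends the ``release'' message of $T_i$'s final (surviving) execution in Phase~5. By inspecting the seven phases, at time $c_i$ every subtransaction $T_{i,\alpha}$ with $\alpha \in S(T_i)$ has already been appended to $L_\alpha$ (in Phase~4) but has not yet been removed from $R(O_d)/W(O_d)$ (Phase~6); hence $c_i$ lies strictly inside the \emph{active interval} $I_{i,\alpha}$ of $T_{i,\alpha}$ in $S_\alpha$ --- the time from when $T_{i,\alpha}$ passes Phase~2 (and is inserted into $R(O_d)/W(O_d)$) until it is released --- and the append of $T_{i,\alpha}$ to $L_\alpha$ also falls inside $I_{i,\alpha}$. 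The core claim is then: if $T_{i,\alpha}\to_L T_{j,\beta}$ then $c_i<c_j$. I would prove this by decomposing the (transitive) causal chain from $T_{i,\alpha}$ to $T_{j,\beta}$ into atomic moves of two kinds: a hop to a sibling subtransaction of the same transaction (which leaves the transaction commit point unchanged), and a single within-shard step $T_{x,\delta}\to_{L_\delta}T_{y,\delta}$. For the latter, $T_{x,\delta}$ and $T_{y,\delta}$ conflict and both survive in $L_\delta$, so by Lemma~\ref{lemma:interleave} (contrapositive) their surviving executions do not both pass Phase~2 concurrently in $S_\delta$; thus $I_{x,\delta}$ and $I_{y,\delta}$ are disjoint, and since $T_{x,\delta}$ precedes $T_{y,\delta}$ in the append order of $L_\delta$, the interval $I_{x,\delta}$ lies entirely before $I_{y,\delta}$, giving $c_x<c_y$. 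Each within-shard step moves to a different transaction (a transaction has at most one subtransaction per shard), and since $T_i\neq T_j$ at least one such step occurs along the chain, so composing along the chain yields $c_i<c_j$ strictly.

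With the core claim, shard-coherence is immediate: suppose $T_{i,\alpha}\to_L T_{j,\beta}$ and $T_{i,\gamma},T_{j,\gamma}$ are conflicting subtransactions in some shard $S_\gamma$ (so $\gamma\in S(T_i)\cap S(T_j)$ and both lie in $L_\gamma$). By Lemma~\ref{lemma:interleave} their active intervals $I_{i,\gamma},I_{j,\gamma}$ are disjoint; since $c_i\in I_{i,\gamma}$, $c_j\in I_{j,\gamma}$, and $c_i<c_j$, the interval $I_{i,\gamma}$ must lie entirely before $I_{j,\gamma}$ (otherwise the right endpoint of $I_{j,\gamma}$ would precede the left endpoint of $I_{i,\gamma}$, forcing $c_j<c_i$). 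Because the append positions in $L_\gamma$ respect the order of the active intervals, $T_{i,\gamma}\prec_{L_\gamma}T_{j,\gamma}$, which is exactly shard-coherence. Invoking Proposition~\ref{prop:shard-coherent} then produces the blockchain serialization $B$.

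I expect the main obstacle to be the rollback machinery rather than the serialization argument itself. I must first pin down what ``the local chain system $L$ produced by the protocol'' means in the presence of force-rollbacks and restarts: I would fix a quiescent configuration and, for each transaction still in $L$, reason about its \emph{last} (surviving) execution, checking (i) that this execution has a well-defined commit point with the interval properties above, and (ii) that Lemma~\ref{lemma:interleave} genuinely applies to the pair of surviving executions of any two conflicting transactions in a shard --- in particular that a subtransaction already released (hence removed from $W(O_d)$) can no longer be force-rolled-back by a later lowest-ID subtransaction, so that ``surviving and conflicting in the same shard'' really does imply disjoint active intervals. Secondary care is needed to spell out that every generating rule of the transitive relation $\to_L$ contributes exactly one within-shard conflict step, so that even long cross-shard causal chains still yield a strict increase $c_i<c_j$ of commit points.
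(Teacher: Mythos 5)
Your proposal is correct and follows essentially the same route as the paper's own proof: reduce to shard-coherence via Proposition~\ref{prop:shard-coherent}, unfold the causal relation $T_{i,\alpha}\to_L T_{j,\beta}$ into a chain of per-shard conflict steps, apply Lemma~\ref{lemma:interleave} to each step, and propagate the ordering of the Phase-5 ``release'' events (your commit points $c_i$ are exactly the paper's ``phase 5 at the leader shard'') along the chain to force $T_{i,\gamma}\prec_{L_\gamma}T_{j,\gamma}$. Your explicit active-interval bookkeeping and the attention to surviving executions under rollbacks is a somewhat more careful rendering of the same argument, which the paper states more informally.
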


\begin{proof}
% From Lemma \ref{lemma:shard-coherent},
From Proposition \ref{prop:shard-coherent},
we only need to prove that $L$ is shard-coherent.

Consider any two transactions $T_i$ and $T_j$ such that $T_{i,\alpha} \to_L T_{j,\beta}$.
Suppose that $T_{i,\gamma}$ and $T_{j,\gamma}$ conflict in shard $S_\gamma$
because they access at least one common object $O_{d}$ and one of the
two subtransactions updates $O_d$.
It suffices to show that $T_{i,\gamma} \prec_{L_\gamma} T_{j,\gamma}$.
%Suppose for the sake of contradiction that $T_{j,\gamma} \prec_{L_\gamma} T_{i,\gamma}$. Hence, subtransaction $T_{j,\gamma}$ commits (in Phase 6) before $T_{i,\gamma}$ in shard $S_\gamma$.

Since $T_{i,\alpha} \to_L T_{j,\beta}$,
from the definition of the $\to_L$ relation,
we have 
that there is a sequence of transactions $T_{k_1}, T_{k_2}, \ldots, T_{k_z}$
with $T_{k_1} = T_i$, $T_{k_z} = T_j$
and $T_{k_i} \to T_{k_{i+1}}$, for $1 \leq i < z$,
such that any pair of consecutive transactions $T_{k_{l}}$ and $T_{k_{l+1}}$
have respective conflicting subtransactions
$T_{k_l, \delta}$ and $T_{k_{l+1}, \delta}$
on some common shard $S_\delta$ such that 
$T_{k_l, \delta} \prec_{L_\delta} T_{k_{l+1}, \delta}$.
%($T_{k_l, \delta}$ commits before $T_{k_{l+1}, \delta}$ in $S_{\delta}$).

Since $T_{k_l, \delta}$ and $T_{k_{l+1}, \delta}$
are appended in the local chain $L_{\delta}$,
while they both conflict, 
we have from Lemma \ref{lemma:interleave}
that they cannot go past phase 2 concurrently without one of them restarting or rolling back.
Therefore, $T_{k_l, \delta}$ finishes phase 6,
before $T_{k_{l+1}, \delta}$ enters phase 4.

This implies that phase 5 of $T_{k_l}$ (at its leader shard) finishes before 
phase 5 of $T_{k_{l+1}}$ starts (at its leader shard).
Therefore,
by induction, we can easily show that the end of phase 5 of $T_{i}$ (at its leader shard) 
occurs earlier than the beginning of phase 5 of $T_{j}$ (at its leader shard).

% %, T_{k_{l+1}, \delta'}, T_{k_{l+2}, \delta'}$,
% since $T_{k_l, \delta} \prec_{L_\delta} T_{k_{l+1}, \delta}$.
% %$and $T_{k_{l+1}, \delta'} \prec_{L_{\delta'}} T_{k_{l+2}, \delta'}$.
% %Namely, in $S_\delta$ the subtransaction of $T_{k_{l}}$ commits before the subtransaction of $T_{k_{l+1}}$.
% %Similarly, in $S_{\delta'}$ the subtransaction of $T_{k_{l+1}}$ commits before
% %the subtransaction of $T_{k_{l+2}}$.
% Therefore, from our algorithm, it must be
% that the end of Phase 5 of $T_{k_l}$ (at its leader shard) occurs before the 
% beginning of Phase 5 of $T_{k_{l+2}}$ (at its leader shard).

% Therefore,
% by simple induction, we can show that the end of Phase 5 of $T_{i}$ (at its home shard) occurs earlier than the beginning of Phase 5 of $T_{j}$ (at its home shard).

Suppose now that $T_{j,\gamma} \prec_{L_\gamma} T_{i,\gamma}$.
Since $T_i$ and $T_j$ commit in $S_\gamma$ and also conflict in $S_\gamma$ 
by sharing the same object,
then from Lemma \ref{lemma:interleave},
%their phases $4$ and $6$ in $\gamma$ cannot be interleaved without one of them being restarted or rolled back.
phase 6 of $T_j$ ends before phase $4$ of $T_i$ starts in $S_\gamma$.
Therefore, phase 5 of $T_j$ ends before phase 5 of $T_i$ starts (at their respective home shards).
This is a contradiction.
Therefore, $T_{i,\gamma} \prec_{L_\gamma} T_{j,\gamma}$, as needed.

% As $T_{i,\alpha} \to_L T_{j,\beta}$, that means transaction $T_i$ commit before $T_j$ so from our protocol, transaction $T_i$ is treated as committed if all of its subtransactions are committed in the respective shard which is handled in phase 4 and phase 6 in destination shards and partially committed subtransaction restart/rollback in case of conflict. So if the transaction $T_i$ is committed first then transaction $T_j$ is committed that means in all destination shard subtransaction of $T_i$ is committed first and then the subtransaction of $T_j$ committed. Therefore we can say that $T_{i,\gamma} \prec_{L_\gamma} T_{j,\gamma}$ holds in shard $S_\gamma$.
\end{proof} 

\begin{theorem}[Liveness]
\label{theorem:liveness}
Our protocol guarantees that every issued transaction will eventually commit.
\end{theorem}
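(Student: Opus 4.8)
The plan is to leverage the ID-based priority mechanism: the background lowest-ID propagation together with the force-rollback machinery ensures that the transaction with the globally smallest ID among those still pending is never preempted and commits within bounded time. Throughout I would use that, since $n_i > 3f_i$ in every shard and each shard runs a PBFT-style consensus, a shard behaves as a reliable sequential processor, and that under the semi-synchronous assumptions every message is delivered within $\Delta_1$ and every local consensus decision completes within $\Delta_3$; hence every phase of Algorithms~\ref{alg:one} and~\ref{alg:two} completes in bounded time. I would first dispense with transactions whose condition is not satisfied: such a transaction produces an ``abort vote'' in phase~2, its leader broadcasts ``abort'' in phase~3, collects ``aborted'' from all destination shards in phase~4, and discards it in phase~5, so it terminates in bounded time. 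It then remains to prove that every transaction whose conditions are satisfiable eventually commits and is never rolled back afterwards (we read the statement of the theorem in this sense).

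The core argument is an induction on the rank of a transaction $T_i$ in the global ID order. By the timing assumptions, any transaction arriving more than $c\cdot\Delta_1$ after $T_i$ has a strictly larger ID, so only finitely many transactions have ID at most $\mathrm{ID}(T_i)$ and they are all issued within a bounded interval; hence this rank is finite and the induction is well founded. The inductive hypothesis is that every transaction with ID smaller than $\mathrm{ID}(T_i)$ eventually commits and is never subsequently rolled back. Granting this, after some finite time no transaction pool $P_k$ contains any transaction with ID smaller than $\mathrm{ID}(T_i)$ (committed transactions are not in any pool and are not propagated as pool minima). Since each leader re-broadcasts its pool minimum every $\Delta_2$ time, within $O(\Delta_2 + \Delta_1)$ additional time the value $T''_l$ stored at every shard becomes exactly $\mathrm{ID}(T_i)$. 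Meanwhile $T_i$ simply waits in the pool $P_k$ of its leader, which always extracts the smallest-ID transaction and spends only bounded time on each one (commit, abort, or restart-and-reinsert), so $T_i$ is picked and split into subtransactions shortly after this convergence.

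Once $T_i$'s subtransactions are out, I would show each of them commits in phase~4. A destination shard either passes the normal read/write-set-and-version test and appends $T_{i,j}$, or, if that test fails because of a conflicting transaction, it still appends $T_{i,j}$ through the clause $T_{i,j} = T''_l$ and simultaneously issues ``force rollback'' to every conflicting transaction in $W(O_d)$. By Lemma~\ref{lemma:interleave} a conflicting transaction that got past phase~2 concurrently must then restart or roll back, and the recursive rollback handling removes the relevant local-chain suffixes and all transitively dependent subtransactions; this recursion terminates because only finitely many subtransactions exist and, by Theorem~\ref{theorem:safety}, the causal/dependency relation is acyclic. Hence $T_i$ collects ``committed'' from all destination shards, then ``released'', and commits. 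Finally, $T_i$ can be rolled back only by a ``force rollback'' issued by some subtransaction $T_{x,j}$ with $T_{x,j} = T''_l$, i.e. with $\mathrm{ID}(T_x) \le \mathrm{ID}(T_i)$; after convergence every $T''_l$ equals $\mathrm{ID}(T_i)$ and all smaller-ID transactions have already committed permanently, so no such rollback of $T_i$ is ever generated and $T_i$ stays committed. This closes the induction and proves the theorem.

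The main obstacle is this last part: carefully establishing the consistency and convergence of the per-shard values $T''_l$ and coupling it with the non-preemption of the current global minimum. Concretely, one must rule out that a concurrently committing transaction with a stale, smaller $T''_l$ preempts $T_i$ (handled by the inductive hypothesis forcing all smaller-ID transactions out of the pools before $T_i$ is processed), and one must argue that the force-rollback cascade triggered by $T_i$ always terminates and never feeds back onto $T_i$ itself (handled by finiteness of subtransactions and acyclicity from Theorem~\ref{theorem:safety}). Bounding the number of restarts of $T_i$ before it becomes the global minimum, rather than merely showing it is eventually the minimum, is the other delicate point, and it reduces to finiteness of the set of smaller-ID transactions together with the bounded per-phase running times.
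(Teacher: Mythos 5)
Your proposal is sound and rests on exactly the same mechanism as the paper's proof---the lowest-ID priority propagated every $\Delta_2$, the force-rollback clause $T_{i,j}=T''_l$ in phase~4, and the boundedness of each phase under the $\Delta_1,\Delta_3$ assumptions---but it is organized quite differently. The paper argues directly and quantitatively: it fixes the set of $q$ transactions whose IDs can still precede $ID(T_i)$ once $c\cdot\Delta_1+\Delta_2$ time has elapsed, charges each of them at most $7(\Delta_1+\Delta_3)$ for its seven phases, and concludes that $T_i$ commits by the explicit deadline $t+q\cdot 7(\Delta_1+\Delta_3)+\Delta_2+c\cdot\Delta_1$; it treats the smaller-ID transactions as simply ``committing before $T_i$ in the worst case'' without examining restarts, rollback cascades, or staleness of $T''_l$. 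You instead run a well-founded induction on the ID rank, with the inductive hypothesis that smaller-ID transactions commit \emph{permanently}, and you make explicit the points the paper glosses over: convergence of the per-shard $T''_l$ values, termination of the recursive rollback cascade (via finiteness and acyclicity from the safety argument), non-preemption of $T_i$ after it commits, and the separate disposal of transactions whose constraints fail (which the paper's theorem statement silently excludes). What the paper's route buys is a concrete time bound; what yours buys is rigor on the permanence and convergence issues, which the paper's counting argument implicitly assumes (its $7(\Delta_1+\Delta_3)$ per transaction presumes no restarts once a transaction is the minimum, precisely the fact your induction isolates). The only caution is that the two delicate points you flag at the end---that the rollback cascade triggered by $T_i$ never reaches back to $T_i$'s own freshly appended subtransactions in other shards, and that a stale smaller $T''_l$ cannot trigger a preempting rollback---are stated as obstacles with a plan rather than fully discharged; since the paper's own proof does not address them either, this does not put you behind the paper, but a complete write-up should carry those arguments through.
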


\begin{proof}
Consider the timing assumptions for $\Delta_1$, $\Delta_2$, and $\Delta_3$ as described in Section \ref{sec:preliminaries-and-sharding-model}.
Consider a transaction $T_i$ with ID $ID(T_i)$ generated at time $t$.
In the worst case, $T_i$ will execute when its ID is the lowest in the system,
through force rollback messages.

After $c \cdot \Delta_1$ time steps, 
every new transaction generated will have a larger ID than $ID(T_i)$,
and hence lower precedence than $T_i$.
It takes additional time $\Delta_2$ to propagate $ID(T_i)$.

Let $ID'_{\min}$ be the smallest ID of all transactions considering all the pools of all shards at time $t$.
Let $q$ be the number of transactions which at time $t + \Delta_2 + c \cdot \Delta_1$ have ID at least 
$ID'_{\min}$ and less than $ID(T_i)$.
In the worst case, all of these $q$ transactions may commit before $T_i$.
As we have 7 phases in our protocol, for each committed transaction,
the combined upper bound for communication and consensus delay time is $7(\Delta_1 + \Delta_3)$.
Hence, it takes at most $q \cdot 7(\Delta_1 + \Delta_3)$ time to commit the $q$ transactions.
Therefore, by time $t + q \cdot 7(\Delta_1 + \Delta_3) + \Delta_2 + c \cdot \Delta_1$ 
transaction $T_i$ will be committed in the blockchain.

\end{proof}

\section{Performance Evaluation}
\label{sec:performance-evaluation}
% We experimentally evaluate our proposed protocol to examine whether our solution improves the transaction processing execution time and throughput.
% \subsection{Experimental Setup}
We set up our experiments in a virtual machine in M1 MAC PC with a 10-core CPU and 32-core GPU, including 32 GB RAM. We used Python programming language for the experiments which supports multiprocessing and multithreading. We virtually created multiple shards within a machine and conducted the experiment with different numbers of shards. For the communication between the shards, we use socket programming in Python, which enables the communication between shards by message passing. Same as previous work \cite{Byshard}, we also assume that each shard runs the consensus algorithm and takes 30ms (say $\Delta_3$) for decisions.

We generate 1000 accounts randomly by using the combination of the English alphabet letters and assigned an initial balance of 3000 to each account. Moreover, we generate 1500 transactions by randomly selecting the account from 1000 accounts. Each transaction includes the read and writes operations with some constraints. The generated 1500 transactions are divided with respect to the number of shards and randomly assigned to the transaction pool of each leader shard.

We show the experimental results in three categories. Firstly, optimal (no lock), means there is no transaction isolation; concurrent transactions can access the accounts and update those accounts without any consideration of the data consistency. Secondly, We used the concept of exclusive lock protocol to ensure transaction isolation and concurrency control. This approach acquires a lock on an object (account), at the time of accessing it and releases the lock after the transaction completes \cite{bernstein1981concurrency}. This prevents other transactions from accessing the same object until the lock is released, ensuring that transactions do not interfere with each other. When a transaction acquires an exclusive lock on a data item, no other transaction can read or modify that data item until the lock is released, providing exclusive access to the data. In our implementation, when an object is locked, other transactions attempting to access the same object wait until the lock is released. This guarantees that the transaction holding the lock has exclusive access to the data and can modify it without interference from other transactions. Finally, we used our protocol to achieve transaction isolation and concurrency control without using a lock, which takes a snapshot of each object (account) and if there is conflict occurs then priority to access the object is given to the earliest transaction and other transaction are restart and rollback to re-execute again.

\paragraph{\bf Experimental Results:}
In the first experiment, we evaluate the average throughput of the transactions using  1500 generated transactions, in which each transaction checks whether the account has sufficient balance or not before transferring from its own account to another account, and the other three constraints. If the transaction is valid and satisfies all the conditions then the transaction is executed by removing the balance from one account and adding that balance to another account (i.e. two write operations). To measure the average throughput of transactions, we initialize the start time at the beginning of the transaction processing and capture the final time after processing all the 1500 transactions.

The average throughput of the transaction with respect to the number of shards is shown in Figure \ref{number-of-shards-vs-tps}, where we measure the average throughput of the transactions by varying the number of shards. From the experiment, we observe that the throughput increases with the number of shards. From Figure \ref{number-of-shards-vs-tps} we can see that the transaction throughput of our protocol is better than the lock-based protocol and quite close to the  no-lock protocol.
\begin{figure}[ht]
\begin{minipage}[t]{0.475\columnwidth}
  \includegraphics[width=\linewidth]{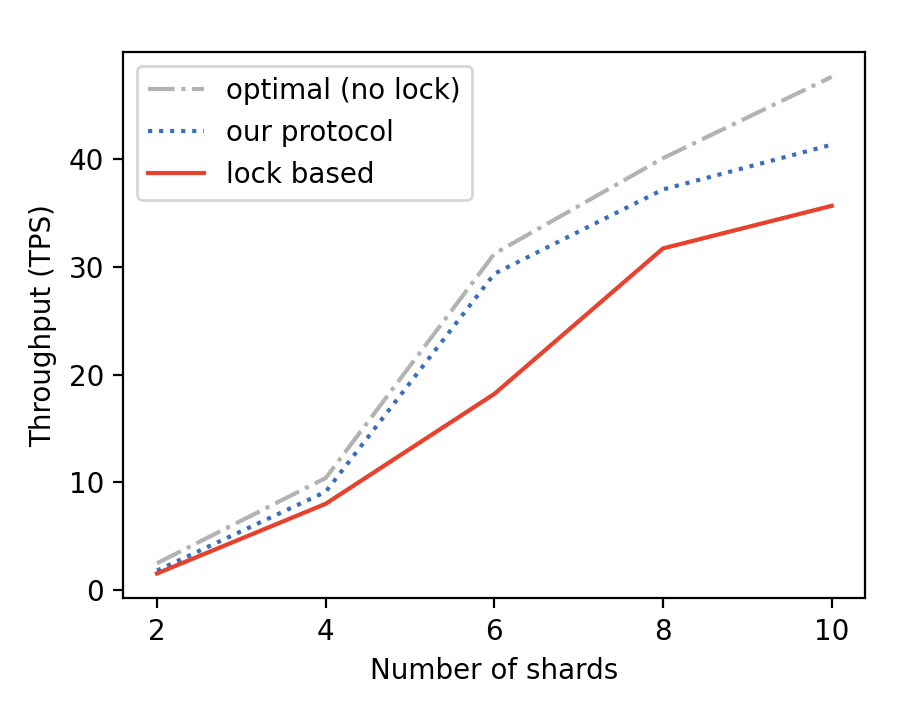}
  \caption{Average transaction throughput with the number of shards}
  \label{number-of-shards-vs-tps}
\end{minipage}\hfill % maximize horizontal separation
\begin{minipage}[t]{0.475\columnwidth}
  \includegraphics[width=\linewidth]{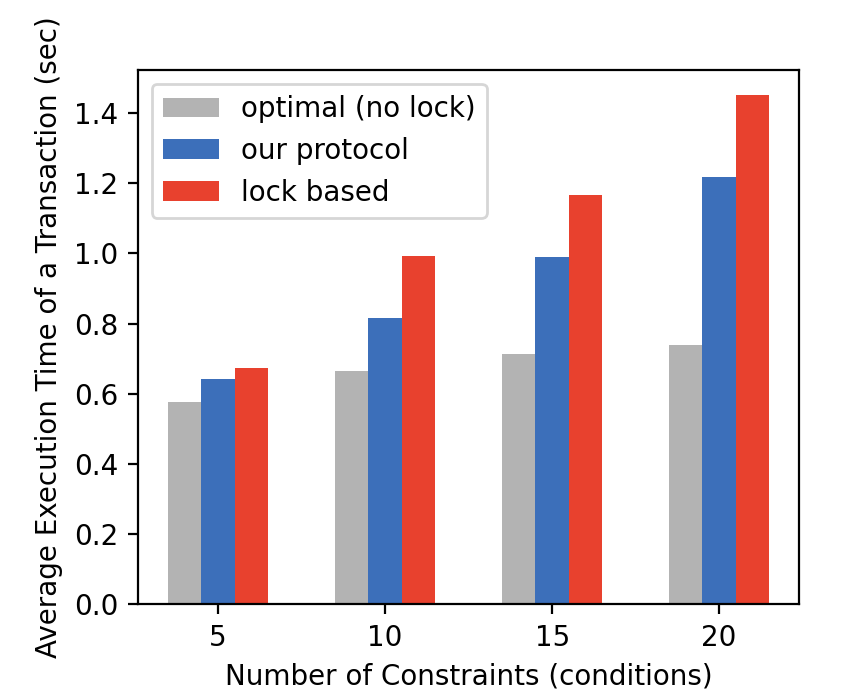}
  \caption{Average execution time of a transaction with numbers of constraints}
  \label{number-of-constraints-vs-execution-time}
\end{minipage}
\end{figure}

In the second experiment, we set up the environment with four shards and calculate the average execution time of a transaction with respect to the number of conditions in each transaction. We increases the constraints of the transactions and recorded the execution of the transactions. In each experiment, we found that the average execution time of our protocol is less than the lock-based protocol also shown  in Figure \ref{number-of-constraints-vs-execution-time}. From the experimental result, we see that as the number of conditions to execute the transaction increased, the commit process takes a long time. As a result, in the lock-based protocol, the lock is kept for a long period, which adds a lot of overhead and takes more time for the execution of transactions than in our proposed protocol.

\section{Conclusion}
\label{sec:conclusion}
In this research work, we presented a lockless transaction scheduling protocol for blockchain sharding.  
Our protocol is based on taking a snapshot version of the various shared objects (accounts) that the transactions access in each shard. We provide a correctness proof with the safety and liveness properties of our protocol.
We also evaluate our protocol experimentally through simulations and we observe that the transaction execution time is considerably faster than the lock-based approaches and also the throughput of the transactions is improved with an increasing number of shards.

This study still has some room for improvement. One possible extension could be a study on efficient communication between leader shards and destination shards.
Introducing a formal performance analysis for blockchain sharding is another interesting topic for future work,
which will quantify the performance based on parameters of the blockchain, such as the number of shards and 
the sizes of the shards.

In recent literature, Schwarzmann \cite{schwarzmann2021towards} reviewed several requirements that need to be satisfied by electronic poll book systems, such as ensuring correctness, security, integrity, fault-tolerance, consistent distributed storage, etc. Our proposed protocol can be used to address some of these issues because it not only provides blockchain features but also offers scalability and better performance for recording transactions. Overall, our protocol may offer many unique features for electronic check-in poll book systems, including decentralization, immutability, and consensus.

\subsection*{Acknowledgements}
This paper is supported by NSF grant CNS-2131538.

% \subsubsection{Acknowledgements} This paper is supported by NSF grant CNS-2131538.

%
% ---- Bibliography ----
%
% BibTeX users should specify bibliography style 'splncs04'.
% References will then be sorted and formatted in the correct style.
%
% \bibliographystyle{splncs04}
% \bibliography{mybibliography}
%

%Bibliography
% \bibliographystyle{unsrt}  
% \bibliography{references} 

\bibliographystyle{splncs04}
\bibliography{references}

\begin{thebibliography}{10}
\providecommand{\url}[1]{\texttt{#1}}
\providecommand{\urlprefix}{URL }
\providecommand{\doi}[1]{https://doi.org/#1}

\bibitem{SharPer}
Amiri, M.J., Agrawal, D., El~Abbadi, A.: Sharper: Sharding permissioned
  blockchains over network clusters. In: Proceedings of the 2021 International
  Conference on Management of Data. pp. 76--88 (2021)

\bibitem{distributed-database-sharding}
Bagui, S., Nguyen, L.T.: Database sharding: to provide fault tolerance and
  scalability of big data on the cloud. International Journal of Cloud
  Applications and Computing (IJCAC)  \textbf{5}(2),  36--52 (2015)

\bibitem{sql-isolation-level}
Berenson, H., Bernstein, P., Gray, J., Melton, J., O'Neil, E., O'Neil, P.: A
  critique of ansi sql isolation levels. ACM SIGMOD Record  \textbf{24}(2),
  1--10 (1995)

\bibitem{bernstein1981concurrency}
Bernstein, P.A., Goodman, N.: Concurrency control in distributed database
  systems. ACM Computing Surveys (CSUR)  \textbf{13}(2),  185--221 (1981)

\bibitem{multiversion-concurrency-control}
Bernstein, P.A., Goodman, N.: Multiversion concurrency control—theory and
  algorithms. ACM Transactions on Database Systems (TODS)  \textbf{8}(4),
  465--483 (1983)

\bibitem{serializable-isolation}
Cahill, M.J., R{\"o}hm, U., Fekete, A.D.: Serializable isolation for snapshot
  databases. ACM Transactions on Database Systems (TODS)  \textbf{34}(4),
  1--42 (2009)

\bibitem{PBFT}
Castro, M., Liskov, B., et~al.: Practical byzantine fault tolerance. In: OsDI.
  vol.~99, pp. 173--186 (1999)

\bibitem{two-phase-locking}
Eswaran, K.P., Gray, J.N., Lorie, R.A., Traiger, I.L.: The notions of
  consistency and predicate locks in a database system. Communications of the
  ACM  \textbf{19}(11),  624--633 (1976)

\bibitem{packing-message-as-a-tool}
Friedman, R., Van~Renesse, R.: Packing messages as a tool for boosting the
  performance of total ordering protocols. In: Proceedings. The Sixth IEEE
  International Symposium on High Performance Distributed Computing (Cat. No.
  97TB100183). pp. 233--242. IEEE (1997)

\bibitem{No-Commit-Proofs}
Giridharan, N., Howard, H., Abraham, I., Crooks, N., Tomescu, A.: No-commit
  proofs: Defeating livelock in bft. Cryptology ePrint Archive  (2021)

\bibitem{transaction-processing}
Gray, J., Reuter, A.: Transaction Processing: Concepts and Techniques. Morgan
  Kaufmann Publishers Inc., San Francisco, CA, USA, 1st edn. (1992)

\bibitem{SBFT}
Gueta, G.G., Abraham, I., Grossman, S., Malkhi, D., Pinkas, B., Reiter, M.,
  Seredinschi, D.A., Tamir, O., Tomescu, A.: Sbft: a scalable and decentralized
  trust infrastructure. In: 2019 49th Annual IEEE/IFIP international conference
  on dependable systems and networks (DSN). pp. 568--580. IEEE (2019)

\bibitem{Cerberus}
Hellings, J., Hughes, D.P., Primero, J., Sadoghi, M.: Cerberus: Minimalistic
  multi-shard byzantine-resilient transaction processing. arXiv preprint
  arXiv:2008.04450  (2020)

\bibitem{Byshard}
Hellings, J., Sadoghi, M.: Byshard: Sharding in a byzantine environment.
  Proceedings of the VLDB Endowment  \textbf{14}(11),  2230--2243 (2021)

\bibitem{Jalal-Window}
{Jalalzai}, M.M., {Busch}, C.: Window based {BFT} blockchain consensus. In:
  iThings, IEEE GreenCom, IEEE (CPSCom) and IEEE SSmartData 2018. pp. 971--979
  (July 2018)

\bibitem{jalalzai2019proteus}
Jalalzai, M.M., Busch, C., Richard, G.G.: Proteus: A scalable bft consensus
  protocol for blockchains. In: 2019 IEEE international conference on
  Blockchain (Blockchain). pp. 308--313. IEEE (2019)

\bibitem{jalalzai2021hermes}
Jalalzai, M.M., Feng, C., Busch, C., Richard, G.G., Niu, J.: The hermes bft for
  blockchains. IEEE Transactions on Dependable and Secure Computing
  \textbf{19}(6),  3971--3986 (2021)

\bibitem{OmniLedger}
Kokoris-Kogias, E., Jovanovic, P., Gasser, L., Gailly, N., Syta, E., Ford, B.:
  Omniledger: A secure, scale-out, decentralized ledger via sharding. In: 2018
  IEEE Symposium on Security and Privacy (SP). pp. 583--598. IEEE (2018)

\bibitem{optimistic-menhods-for-concurrency-control}
Kung, H.T., Robinson, J.T.: On optimistic methods for concurrency control. ACM
  Transactions on Database Systems (TODS)  \textbf{6}(2),  213--226 (1981)

\bibitem{Elastico}
Luu, L., Narayanan, V., Zheng, C., Baweja, K., Gilbert, S., Saxena, P.: A
  secure sharding protocol for open blockchains. In: Proceedings of the 2016
  ACM SIGSAC conference on computer and communications security. pp. 17--30
  (2016)

\bibitem{lockless-transaction-hyperledger-fabric}
Meir, H., Barger, A., Manevich, Y., Tock, Y.: Lockless transaction isolation in
  hyperledger fabric. In: 2019 IEEE International Conference on Blockchain
  (Blockchain). pp. 59--66 (2019). \doi{10.1109/Blockchain.2019.00017}

\bibitem{bitcoin}
Nakamoto, S.: Bitcoin : A peer-to-peer electronic cash system (2009)

\bibitem{blockchain-hype-vs-reality}
Pisa, M., Juden, M.: Blockchain and economic development: Hype vs. reality.
  Center for Global Development Policy Paper  \textbf{107}, ~150 (2017)

\bibitem{serializable-snapshot-isolation}
Ports, D.R., Grittner, K.: Serializable snapshot isolation in postgresql. arXiv
  preprint arXiv:1208.4179  (2012)

\bibitem{survey-of-onsensus}
Sankar, L.S., Sindhu, M., Sethumadhavan, M.: Survey of consensus protocols on
  blockchain applications. In: 2017 4th international conference on advanced
  computing and communication systems (ICACCS). pp.~1--5. IEEE (2017)

\bibitem{schwarzmann2021towards}
Schwarzmann, A.A.: Towards a robust distributed framework for election-day
  voter check-in. In: Stabilization, Safety, and Security of Distributed
  Systems: 23rd International Symposium, SSS 2021, Virtual Event, November
  17--20, 2021, Proceedings 23. pp. 173--193. Springer (2021)

\bibitem{Rapidchain}
Zamani, M., Movahedi, M., Raykova, M.: Rapidchain: Scaling blockchain via full
  sharding. In: Proceedings of the 2018 ACM SIGSAC conference on computer and
  communications security. pp. 931--948 (2018)

\end{thebibliography}

\end{document}